\newcommand{\ie}{\emph{i.e.,}\xspace}
\newcommand{\eg}{\emph{e.g.,}\xspace}
\newcommand{\sys}{Umbrella\xspace}
\newcommand{\first}{\textsf{(i)}\xspace}
\newcommand{\second}{\textsf{(ii)}\xspace}
\newcommand{\third}{\textsf{(iii)}\xspace}
\newtheorem{theorem}{Theorem}
\newtheorem{lemma}{Lemma}
\begin{document}

\title{\huge \sys: Enabling ISPs to Offer Readily Deployable and Privacy-Preserving DDoS Prevention Services}

\author{Zhuotao~Liu,
        Yuan~Cao,~Min~Zhu,
        and~Wei~Ge
\thanks{Z. Liu is with Electrical and Computer Engineering Department, University of Illinois at Urbana-Champaign, Urbana, IL 61801 USA. Z. Liu is also with Google Inc., Mountain View, CA 94043 USA (email: zliu48@illinois.edu).}
\thanks{Y. Cao is with the College of Internet of Things Engineering, Hohai University, Changzhou 213022, China. Y. Cao is also with the Key Laboratory of Computer Network and Information Integration, Ministry of Education, Southeast University (Corresponding author: Yuan Cao, E-mail: 20161965@hhu.edu.cn).}
\thanks{M. Zhu is with Wuxi Research Institute of Applied Technologies Tsinghua University, China}
\thanks{W. Ge is with Electrical Engineering Department of Southeast University, China}
}


\maketitle

\begin{abstract}
Defending against distributed denial of service (DDoS) attacks in the Internet is a fundamental problem. However, recent industrial interviews with over 100 security experts from more than ten industry segments indicate that DDoS problems have not been fully addressed. The reasons are twofold. On one hand, many academic proposals that are provably secure witness little real-world deployment. On the other hand, the operation model for existing DDoS-prevention service providers (e.g., Cloudflare, Akamai) is privacy invasive for large organizations (\eg government). In this paper, we present \sys, a new DDoS defense mechanism enabling Internet Service Providers (ISPs) to offer readily deployable and privacy-preserving DDoS prevention services to their customers. At its core, \sys develops a multi-layered defense architecture to defend against a wide spectrum of DDoS attacks. In particular, the flood throttling layer stops amplification-based DDoS attacks; the congestion resolving layer, aiming to prevent sophisticated attacks that cannot be easily filtered, enforces congestion accountability to ensure that legitimate flows are guaranteed to receive their fair shares regardless of attackers' strategies; and finally the user-specific layer allows DDoS victims to enforce self-desired traffic control policies that best satisfy their business requirements. Based on Linux implementation, we demonstrate that \sys is capable to deal with large scale attacks involving millions of attack flows, meanwhile imposing negligible packet processing overhead. Further, our physical testbed experiments and large scale simulations prove that \sys is effective to mitigate various DDoS attacks.
\end{abstract}

\begin{IEEEkeywords}
DDoS Attacks, Privacy-Preserving, ISPs, Immediate Deployability.
\end{IEEEkeywords}

\section{Introduction} \label{sec:introduction}
Distributed denial of service (DDoS) attacks have been considered as a serious threat to the availability of Internet. Over the past few decades, both industry and academia make a considerable effort to address this problem. Academia have proposed various approaches, ranging from filtering-based approaches~\cite{practicalIPTrace, advancedIPTrace,AITF,pushback,implementPushback,StopIt}, capability-based approaches~\cite{siff, TVA, netfence, MiddlePolice}, overlay-based systems~\cite{phalanx,sos,mayday}, systems based on future Internet architectures~\cite{scion,aip,xia} and other variance~\cite{speakup,mirage, CDN_on_Demand}. Meanwhile, many large DDoS-protection-as-a-service providers (\eg Akamai, CloudFlare), some of which are unicorns, have played an important role in DDoS prevention. These providers massively over-provision data centers for peak attack traffic loads and then share this capacity across many customers as needed. When under attack, victims use Domain Name System (DNS) or Border Gateway Protocol (BGP) to redirect traffic to the provider rather than their own networks. The DDoS-protection-as-a-service provider applies their proprietary techniques to scrub traffic, separating malicious from benign, and then re-injects only the benign traffic back into the network to be carried to the victim.

Despite such effort, recent industrial interviews with over 100 security engineers from over ten industry segments that are vulnerable to DDoS attacks indicate DDoS attacks have not been fully addressed~\cite{middlepolice-ton}. First, since most of the academic proposals incur significant deployment overhead (\eg requiring software/hardware upgrades from a large number of Autonomous Systems (AS) that are unrelated to the DDoS victim, changing the client network stack such as inserting new packet headers), few of them have ever been deployed in the Internet. Second, existing security-service providers are not cures for DDoS attacks for all types of customer segments. In particular, a prerequisite of using their security services is that a destination site must redirect its network traffic to these service providers. Cloudflare, for instance, will terminate all user Secure Sockets Layer (SSL) connections to the destination at Cloudflare's network edge, and then send back user requests (after applying their secret sauce filtering) to the destination server using new connections. Although this operation model is acceptable for small websites (\eg personal blogs), it is privacy invasive for some large organizations like government, hosting companies and medical foundations.

As a result, these organizations have no other choices but to rely on their Internet Service Providers (ISPs) to block attack traffic. Realizing this issue, in this paper, we propose \sys, a new DDoS defense mechanism focusing on enabling ISPs to offer readily deployable and privacy-preserving DDoS prevention services to their customers. The design of \sys is lessoned from real-world DDoS attacks that intentionally disconnect the victim from the public Internet by overwhelming the victim's inter-connecting links with its ISPs. Thus, \sys proposes to protect the victim by allowing its ISPs to throttle attack traffic, preventing any undesired traffic from reaching the victim. Compared with previous approaches requiring Internet-wide AS cooperation, \sys simply needs \emph{independent deployment} at the victim's direct ISPs to provide immediate DDoS defense. Further, unlike existing security-service providers, an ISP does not need to terminate the victim's connections. Instead, the ISP still operates on \emph{network layer} as usual to completely preserve the victim's application layer privacy. Third, \sys is \emph{lightweight} since it requires no software and hardware upgrades at both the Internet core and clients. Finally, \sys is \emph{performance friendly} because it is overhead-free during normal scenarios by staying completely idle and imposes negligible packet processing overhead during attack mitigation.

\begin{figure*}[t]
  \centering
  \mbox{
    \subfigure[\label{fig:architecture:a}Filtering-based approaches.]{\includegraphics[scale=0.3]{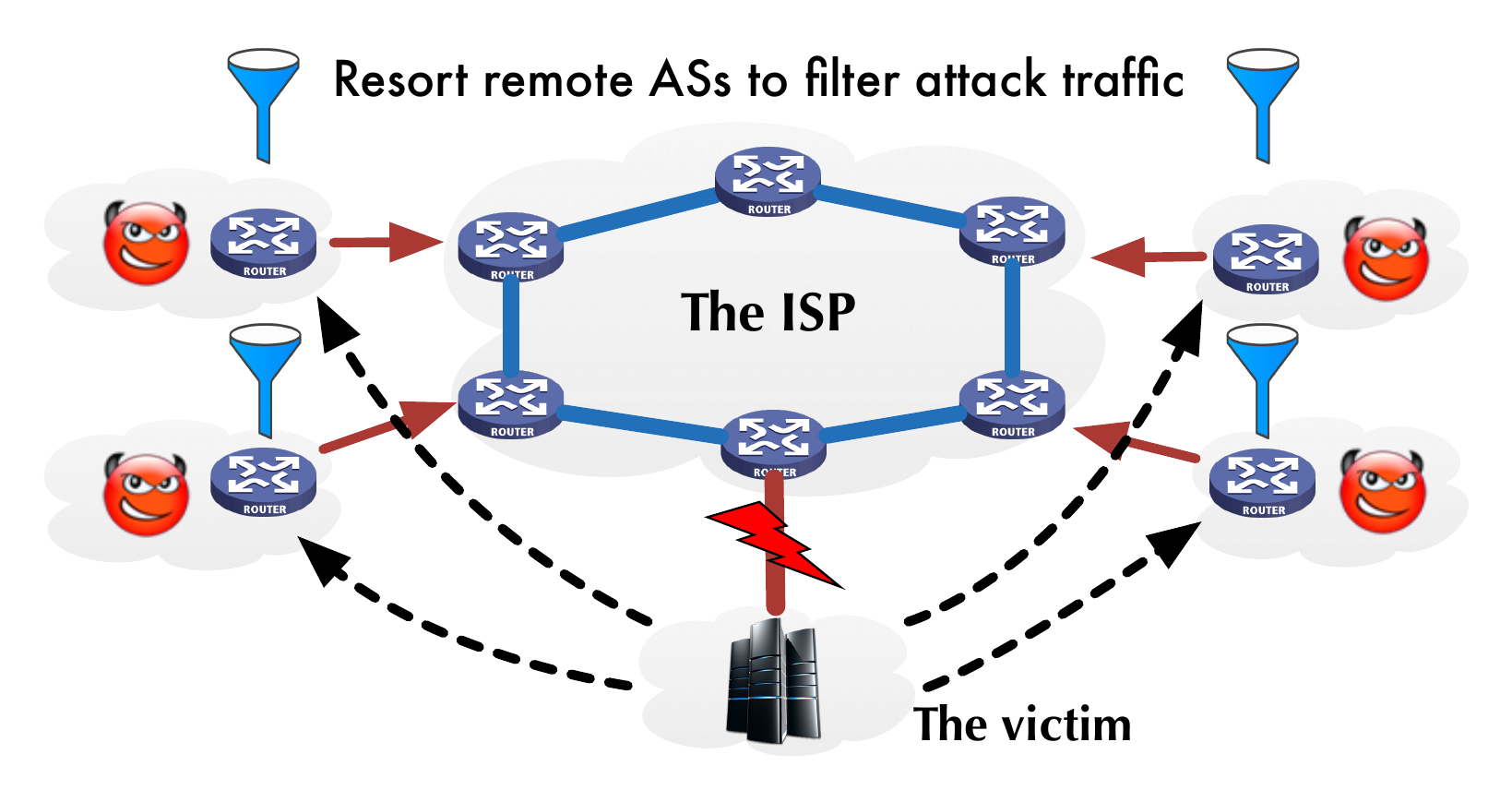}}
    \subfigure[\label{fig:architecture:b}Capability-based approaches.]{\includegraphics[scale=0.3]{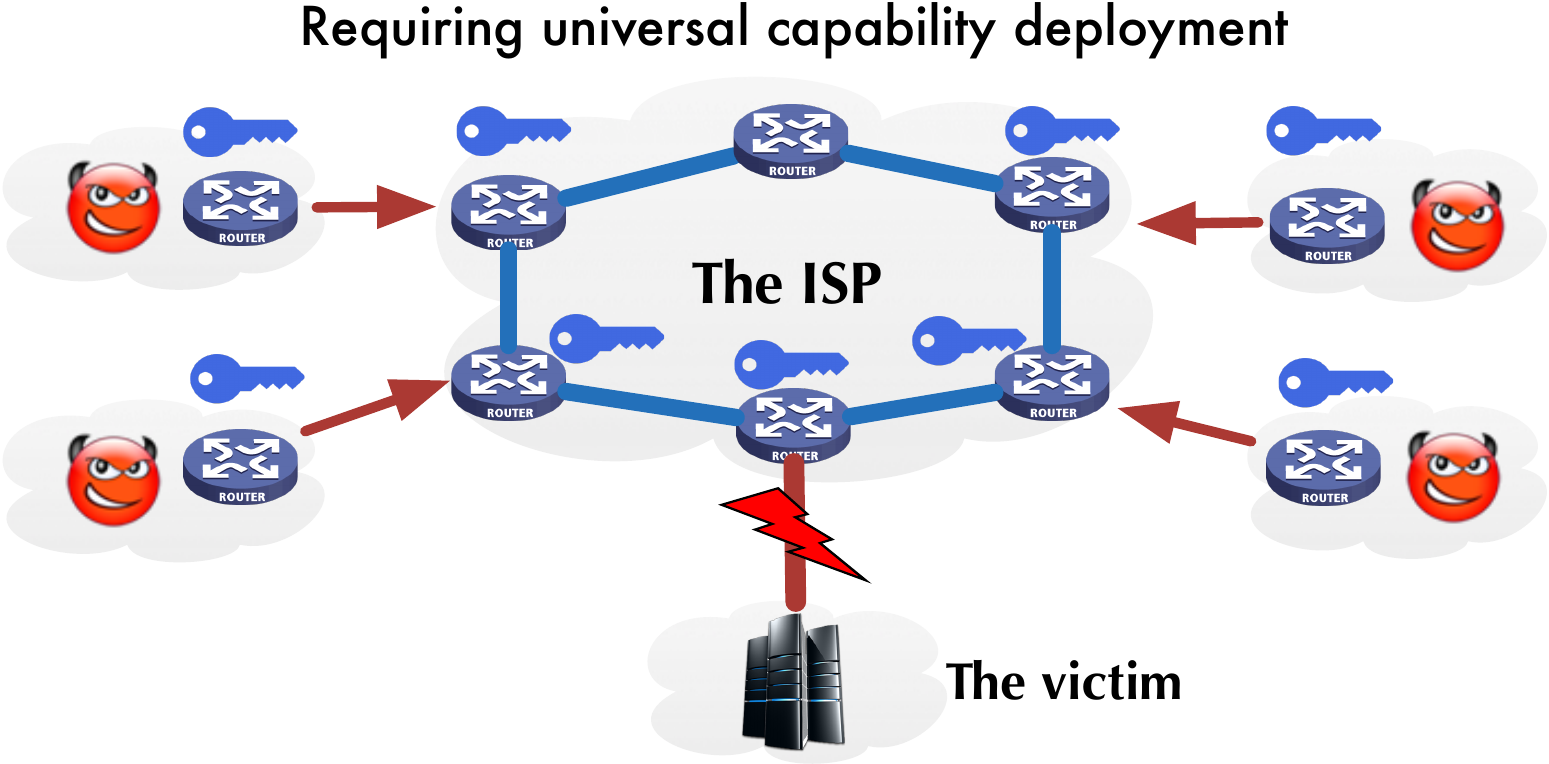}}
     \subfigure[\label{fig:architecture:c}\sys.]{\includegraphics[scale=0.3]{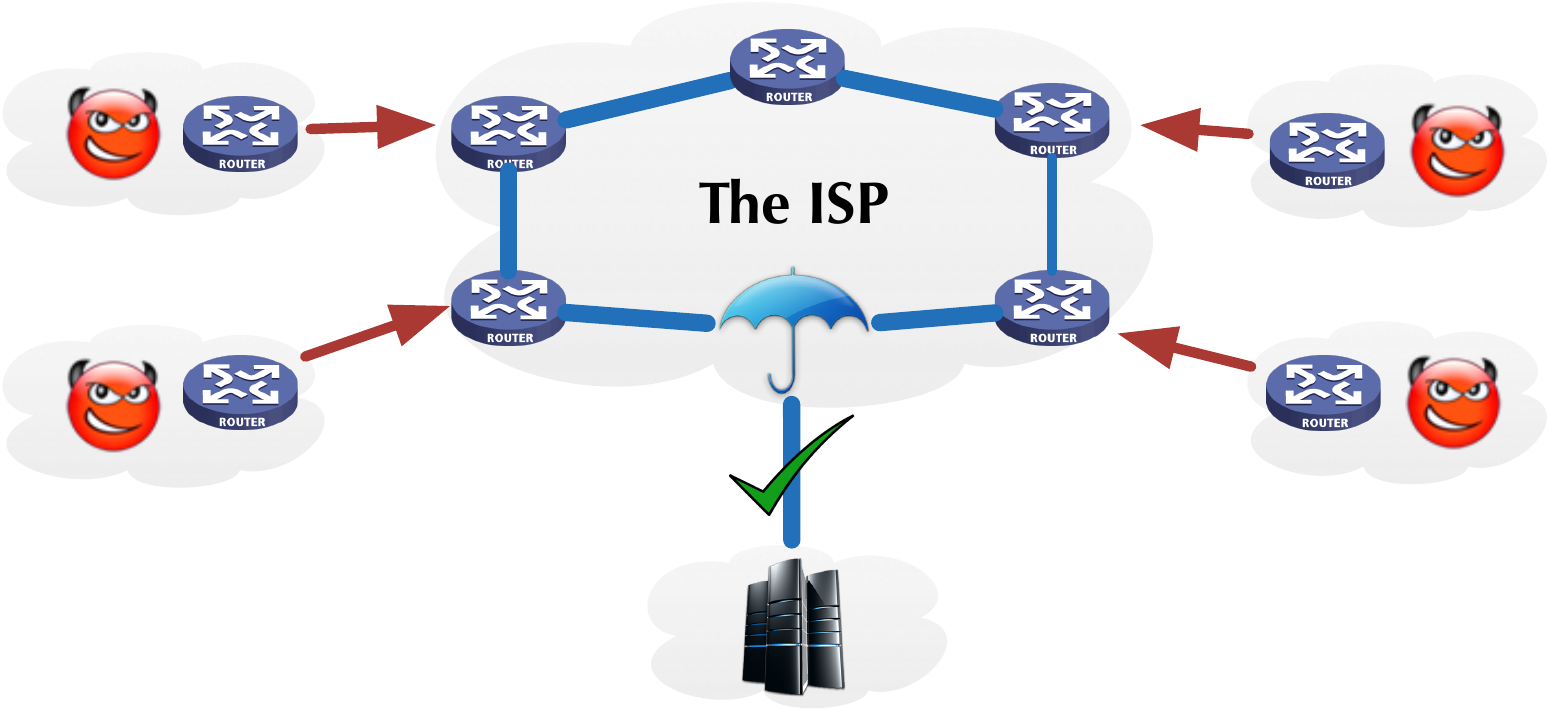}}
    \subfigure[\label{fig:architecture:d}Multi-layered defense.]{\includegraphics[scale=0.3]{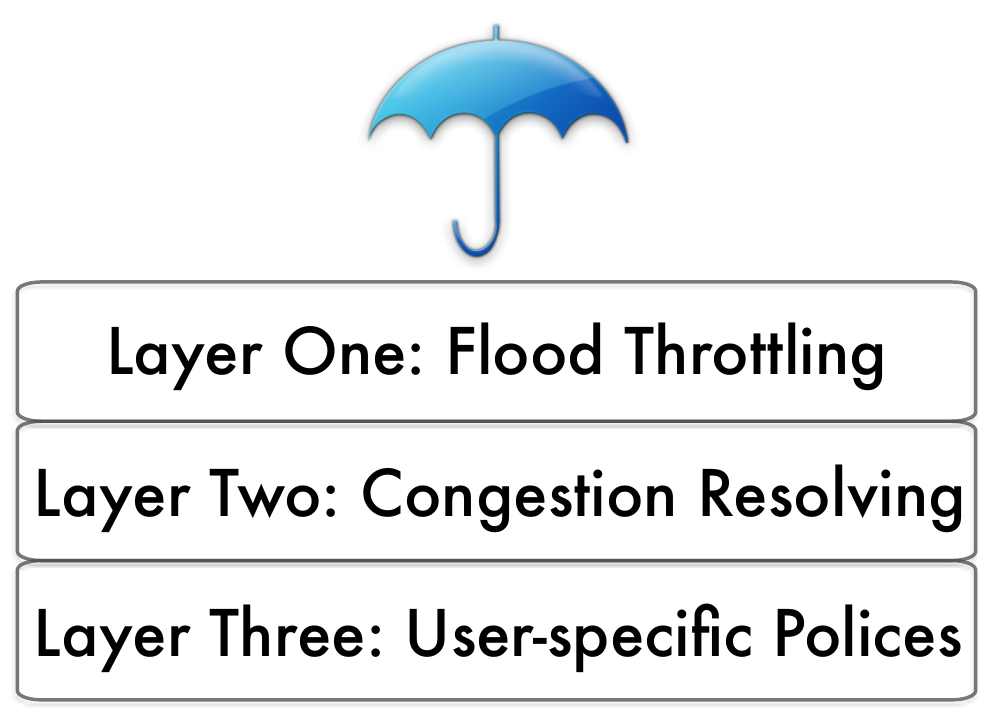}}
    }
  \caption{Part (a) and part (b) show that filtering-based and capability-based approaches are
  difficult to deploy in the Internet since they require Internet-wide AS cooperation and facility upgrades. Part (c)
  shows \sys is immediately deployable at the victim's ISP. Part (d) details \sys's multi-layered defense.}
  \label{fig:architecture}
\end{figure*}

In its design, \sys develops a novel multi-layered defense architecture. In the \emph{flood throttling} layer, \sys defends against the amplification-based attacks that exploit various network protocols (\eg Simple Service Discovery Protocol (SSDP), Network Time Protocol (NTP))~\cite{arbor}. Although such attacks may involve extremely high volume of traffic (\eg hundreds of gigabit per second), they can be effectively detected via static filters and therefore stopped. In the \emph{congestion resolving} layer, \sys defends against more sophisticated attacks in which adversaries may adopt various strategies. \sys brings out a key concept \emph{congestion accountability}~\cite{FlowPolice} to selectively punish users who keep injecting packets in case of severe congestive losses. The congestion resolving layer provides both \emph{guaranteed} and \emph{elastic} bandwidth shares for legitimate flows: \first regardless of attackers' strategies, legitimate users are guaranteed to receive their fair share of the victim's bandwidth; \second when attackers fail to execute their optimal strategy, legitimate clients are able to enjoy more bandwidth shares. The last layer, \emph{user-specific} layer, allows the victim to enforce self-interested traffic policing rules that are most suitable for their business logic. For instance, if the victim never receives certain type of traffic, it can inform \sys to completely block such traffic when  attacks are detected. Similarly, the victim can instruct \sys to reserve bandwidth for premium clients so that they will not be affected by DDoS attacks.

In summary, the major contributions of this paper are the design, implementation and evaluation of \sys, a new DDoS defense approach that enables ISPs to offer readily deployable and privacy-preserving DDoS prevention services to their downstream customers. The novelties of \sys live in the following two dimensions. First, unlike the vast majority of academic DDoS prevention proposals which require extensive Internet core and client network-stack change, \sys only requires lightweight upgrades from business-related entities (\ie the potential DDoS victim itself and its direct ISPs), yielding instant deployability in the current Internet architecture. Second, compared with the existing deployable industrial DDoS mitigation services, \sys, through our novel multi-layer defense architecture, offers both privacy-preserving and  complete DDoS prevention that can deal with a wide spectrum of attacks, and meanwhile offer victim-customizable defense. We implement \sys on Linux to study its scalability and overhead. The result shows that a commodity server can effectively handle $100$ million attackers and introduces merely ${\sim}0.06\mu s$ packet processing overhead. Finally, we perform detailed evaluation on our physical testbed, our flow-level simulator and the ns$3$ packet-level simulator~\cite{ns3} to demonstrate the effectiveness of \sys to mitigate DDoS attacks.

\section{Problem Space and Goals}\label{sec:design_rational}
In this section, we discuss \sys's problem space and its design goals. Completely preventing DDoS attacks is an extremely large scope. The problem space articulates \sys's position within this scope. Further, the design goals of \sys are designed based on the industrial interviews in \cite{middlepolice-ton} so that \sys indeed offers desirable DDoS prevention to those large and privacy-sensitive potential victims such as government and medical infrastructures. We do not however claim that these goals are universally applicable to all types of potential victims (for instance, a web blogger may simply choose CloudFlare to keep her website online).

\subsection{Problem Space}\label{sec:problem_space}
\noindent\textbf{Network-Layer DDoS Mitigation.}
We position \sys as a network-layer DDoS defense approach to stop undesirable traffic from reaching and consuming resources of the victim's network. Specifically, \sys is designed to prevent attackers from exhausting the bandwidth of the inter-network link connecting the victim's network to its ISP so as to keep the victim ``online" even in face of DDoS attacks. Note that \sys should be viewed to be complementary to other solutions addressing DDoS attacks at other layers (\eg layer $7$ HTTP attacks). Only concerted efforts contributed by these solutions can potentially provide complete defense against all types of DDoS attacks.

\noindent\textbf{Adversary Model.}
We consider strong adversaries that can compromise both end-hosts and routers. They are able to launch strategic attacks (\eg the on-off shrew attack~\cite{low-rate}), launch flash attacks with numerous short flows, adopt maliciously tampered transport protocols (\eg poisoned TCP protocols that does not properly adjust rates based on network congestion), leverage the Tor network to hide their identities, spoof addresses and recruit vast amounts of bots to launch large scale DDoS attacks.

\noindent\textbf{Assumptions.}
\sys maintains per-sender state in the congestion resolving layer, and consequently relies on the correctness of source addresses. Such correctness can be assured by the more complete adoption of Ingress Filtering~\cite{ingressFilter, BCP38} or the source authentication schemes~\cite{passport, OPT}. On our way to achieve complete spoof elimination\footnote{In fact, the Spoofer Project~\cite{spoofer_project} extrapolates that only ${\sim}13.6\%$ of addresses are spoofable, indicating a tremendous progress.}, \sys requires victim's additional participation to minimize the chance of source spoofing. In particular, the victim needs to provide a list of authenticated (\ie TCP handshakes with these sources are successfully established) or preferred source IP addresses (based on the victim's routine traffic analysis performed in normal scenarios) so that \sys will only maintain per-sender state for these addresses during attack mitigation.

\subsection{Design Goals}
\noindent\textbf{Readily Deployable.} One major design goal of \sys is to be immediately deployable in the current Internet architecture. To this end, the functionality of \sys relies only on independent deployment at the victim's ISP without further deployment requirements at remote ASs that are unrelated with the victim. As illustrated in Fig. \ref{fig:architecture:c}, \sys can be deployed at the upstream of the link connecting the victim's network and its ISP. In the rest of the paper, we refer to this link as interdomain link  and its bandwidth interdomain bandwidth. Note that \sys deployed at victim's ISP cannot cannot stop DDoS attacks trying to disconnect the victim's ISP from its upstream ISPs. However, the victim's ISP, now becoming a victim itself, should have motivation to protect itself by purchasing \sys's protection from its upstream ISPs. Recursively, DDoS attacks happened at different levels of the Internet hierarchy can be resolved. The neat idea of \sys is that it never requires cooperation among a wide range of (unrelated) ASes. Rather, \emph{independent deployment} is sufficient and effective.

\noindent\textbf{Privacy-Preserving and Customizable DDoS Prevention:} Another primary design goal of \sys is to offer privacy-preserving and customizable DDoS prevention. On one hand, \sys requires no application connection termination at ISPs, allowing them to operate at the network layer as usual, which completely preserves the application-layer privacy of their customers. One the other hand, \sys's multi-layered defense enables ISPs to offer customizable DDoS prevention that is driven by the customer policies.

\noindent\textbf{Lightweight and Performance Friendly:} \sys's deployment is very lightweight: it can be implemented as a software router at the interdomain link, maintaining at most per-source state. Our prototype implementation demonstrates that a commodity server can effectively scale up to deal with millions of states. Further, \sys is completely idle and transparent to applications in normal scenarios, introducing zero overhead. During DDoS attack mitigation, \sys's traffic policing introduces negligible packet processing overhead comparing with previous approaches requiring complicated and expensive operation, such as adding cryptographic capabilities and extra packet headers~\cite{netfence}.

Given these goals, we position \sys as a practical DDoS prevention service, offered by ISPs, that is desirable by large and privacy-sensitive potential DDoS victims.

\section{Design Overview}\label{sec:system}
In its design, \sys develops a three-layered defense architecture to stop undesirable traffic. The user-specific layer, enforcing policies defined by the victim, has priority over the rest two layers, which operate in parallel. \sys is only active when it notices the features of volumetric DDoS attacks against the interdomain link (\eg the link experiences enduring congestion causing severe packet losses). \sys stops traffic policing and becomes idle when the link  restores its normal state. As part of the user-specific layer, the victim is free to define specific rules to determine  when the traffic policing should be initiated or terminated.

\subsection{Flood Throttling Layer}\label{sec:flood_layer}
Flood throttling layer is designed to stop amplification-based attacks, in which attackers send numerous requests, with spoofed source address as the victim's address, to public servers serving certain Internet protocols (\eg Network Time Protocol, Domain Name System). As a result, the victim receives extremely high volume of responses, resulting interdomain bandwidth exhaustion and disconnection from its ISP. Although the attack volumes can as high as around 600Gpbs~\cite{arbor}, these attacks are easy to be detected. According to the analysis in \cite{arbor}, only seven network service protocols are exploited to launch amplification-based DDoS attacks and over 87\% of them rely on UDP with specific port numbers. Thus, by installing a set of static filters based on these network service protocols, \sys can effectively throttle these large-yet-easy-to-catch DDoS attacks.

In the case where the victim does receive traffic for certain network service protocols that may be exploited to launch amplification attacks, \sys can leverage the Weighted Fair Queuing technique to minimize the potential affect of these attacks. For instance, if based on traffic analysis in normal scenarios, $10\%$ of the victim's traffic traversing the interdomain link is NTP (on UDP port 123), then such traffic should be served in a queue whose normalized weight is configured as 0.1. The weighted fair queuing ensures that the victim always has sufficient bandwidth to process other flows (which are served in separate queues), regardless of how much NTP traffic is thrown to the victim. 

\subsection{Congestion Resolving Layer}\label{sec:congestion_layer}
The congestion resolving layer is designed to stop subtle and sophisticated DDoS attacks that rely on numerous seemly legitimated TCP traffic. The crucial part of the defense is to enforce \emph{congestion accountability}~\cite{FlowPolice} so as to punish attackers who keep injecting vast amounts of traffic in face of congestive losses. Specifically, in volumetric DDoS attacks, overloaded routers drop packets from all users regardless of which users cause the \emph{enduring} congestion. In other words, congestion accountability is not considered while dropping packets. Consequently, legitimate users that run congestion-friendly protocol (\eg TCP) are falsely penalized during the congestion since it is actually caused by attackers. By enforcing congestion accountability, \sys is able to selectively punish misbehaved flows and therefore stop attack traffic.

\sys analyzes each user's congestion accountability from the perspective of the goal of network usage. Legitimate users aim to deliver or receive data via the network. Thus, when encountering congestion, a legitimate TCP sender tries to \emph{relieve} the congestion by reducing its rate because its receiver cannot decode the data if some packets are lost. Sending more packets therefore makes no progress on finishing the data transfer, but it causes more severe congestion. On the contrary, attackers focus on exhausting network resources and care little about delivering data. As a result, they consistently generate traffic to \emph{contribute to} the congestion regardless of how many packets have been dropped. Therefore, users who overlook packet losses and continuously inject packets are accountable for the enduring congestion happened during DDoS attacks.

To resolve the congestion, \sys keeps a \emph{rate limiting window} for each user to prevent any user from sending faster than its rate limiting window. The size of the rate limiting window is determined by the rate limiting algorithm (\S \ref{sec:rateLimit}), taking input as the users' sending rates and packet losses. All information needed to make rate limiting decisions is recorded in \sys's flow table (\S \ref{sec:flowTable}). The design of the rate limiting algorithm ensures \first the more aggressively attackers behave, the less bandwidth shares they will obtain; \second each legitimate client is guaranteed to receive the per-sender fair share of the congestion resolving layer's bandwidth regardless of attackers' strategies (note that part of the interdomain bandwidth may be used in other layers). Further, legitimate users may obtain more bandwidth than the per-sender fair share when attackers fail to execute their optimal strategy.


\subsection{The User-specific Layer}\label{sec:user_layer}
The goal of adding the user-specific defense layer is to provide the flexibility for the victim to enforce self-interested traffic control policies that are most suitable for their business logic, including adopting different fairness metrics from the \sys's default one (per-sender fairness) and offering proactive DDoS defense for premium clients so that they will never be disconnected from the victim.
Allowing user-specific policies differs \sys from previous in-network DDoS prevention mechanisms that force the victim to accept the single policy proposed by these approaches. Thus \sys creates extra deployment incentives for ISPs by enabling them to offer customized DDoS defense to consumers.

\section{Design Details}
Since the flood throttling layer is straightforward in its design and the user-specific layer is typically driven by the victim, we focus on elaborating the congestion resolving layer in this section. However, we have a full implementation of \sys with all three layers in \S \ref{sec:implementation}.

\subsection{Flow Table}\label{sec:flowTable}
\sys's flow table maintains per-sender network usage. Specifically, all packets sent from the same source are aggregated (and defined) as one \emph{coflow}\footnote{The concept of coflow is also introduced in data centers, meaning a group of flows from the same task~\cite{coflow2}.} and the flow table maintains states for each coflow. As discussed in the \emph{Assumptions} section of \S \ref{sec:problem_space}, the flow table maintains state only for the set of source IP addresses explicitly provided by the victim to prevent adversaries from exhausting table state via spoofed addresses. \sys does not keep states for each individual TCP flow (identified by its $5$-tuple) since the behavior of one single flow may not reflect the intention of the sender (malicious or not). For instance, one bot keeps sending new flows to the victim although previous flows experience severe losses. Even each individual flow may be a legitimate TCP flow, the bot is actually acting maliciously. However, if we interpret its behaviors from the coflow's perspective, we can figure out that the bot continuously creates traffic in face of congestive losses. Thus it is accountable for the congestion and will be rate limited. In the rest of the paper, unless otherwise stated, flow and coflow are used interchangeably.

\begin{figure}[t]
\centering
\begin{tabular}{|>{}c|>{}c|>{}c|>{}c|>{}c|>{}c|}
\hline
    $f$&
    $\mathcal{T}_A$&
    $\mathcal{W}_R$&
    $\mathcal{P}_R$&
    $\mathcal{P}_D$&
    $\mathcal{L}_R$\\
\hline
	$64$&
	$32$&
	$32$&
	$32$&
	$32$&
	$64$\\
\hline
\end{tabular}
\caption{Each field in a flow entry and its corresponding size (bits).}
\label{fig:flow_entry}
\end{figure}

Each flow entry (identified by its source address $f$) in the flow table is composed of
a timestamp $\mathcal{T}_A$, $f$'s rate limiting window $\mathcal{W}_R$,
the number of packets $\mathcal{P}_R$ received from $f$, the number
of dropped packets $\mathcal{P}_D$ from $f$ and its packet loss rate $\mathcal{L}_R$.
Further, \sys maintains $\mathcal{W}_R^T$, the sum of rate limiting
windows of all flows, shared by all flow entries.
These information is necessary for the rate limiting algorithm.

\subsection{Rate Limiting Algorithm}\label{sec:rateLimit}
The rate limiting algorithm is designed to enforce congestion accountability by punishing misbehaved users who
keep sending packets in face of severe congestive losses. By early dropping the
undesirable packets, \sys can effectively prevent bandwidth exhaustion.
In its design, the algorithm executes periodic rate limiting for each flow during DDoS attacks.
Specifically, in each \emph{detection period}, the number of packets allowed for
each flow (or sender) is limited by its rate limiting window $\mathcal{W}_R$.
The $\mathcal{W}_R$ is updated every detection period according
to the flow's information recorded in the flow table, such as the flow's packet loss rate $\mathcal{L}_R$ and
its transmission rate $\mathcal{P}_R$.

\subsubsection{Populating the flow table}
Assume at time $ts$, a new flow $f$ is initiated.
\sys creates a flow entry for $f$ in its flow table.
All fields of the entry are initialized to be zero.
Then \sys updates $\mathcal{T}_A$ as $ts$, increases $\mathcal{P}_R$ by one
and sets the initial $\mathcal{W}_R$ as the pre-defined fair share rate
$\mathcal{W}_{fair}$ (discussed in \S\ref{sec:paraSettings}).
From then on, \sys increases $\mathcal{P}_R$ by one for each arrived
packet of $f$ until the end of the current detection period (\eg the end of the first detection period).
\sys uses packet arrival time to detect when it should start a new detection period for $f$.
Specifically, letting $\mathcal{D}_p$ denote the length of detection period,
when received a packet with arrival time $t_0 {>} \mathcal{T}_A {+} \mathcal{D}_p$,
\sys realizes that this packet is the first one received in the new detection period.
Then \sys performs the following updates in order:
\first Set $\mathcal{T}_A=t_0$;
\second Update $\mathcal{W}_R$ and $\mathcal{L}_R$ according to the Algorithm \ref{algo:rateLimit};
\third Reset $\mathcal{P}_R$ and $\mathcal{P}_D$ as zero.

\subsubsection{The rate limiting algorithm}
\newcommand{\algrule}[1][.2pt]{\par\vskip.5\baselineskip\hrule height #1\par\vskip.5\baselineskip}
\begin{algorithm}[t]
\SetDataSty{textsl}
\SetKwData{CQ}{CongestionQueue}
\SetKwData{RL}{RateLimitingDecision}
\SetKwData{FC}{\textbf{Function:}}
\SetKwData{IP}{\textbf{Input:}}
\SetKwData{OP}{\textbf{Output:}}
\SetKwData{Main}{\textbf{Main Procedure:}}
\SetKwData{WS}{RateLimitingWindow}
\SetKwData{IN}{\textbf{Flow Initialization:}}

\footnotesize

\IP \\
\quad The service queue $\mathcal{Q}_C$ executing the FIFO principle.\\
\quad $f$'s entry in the flow table. \\
\quad $\mathcal{B}$: the bandwidth available in the congestion resolving layer. \\
\quad $\mathcal{W}_R^T$: the sum of all flows' rate liming windows. \\
\quad System related parameters: the detection period length $\mathcal{D}_p$, the
weight $\lambda$ for previous packet losses, the packet loss rate threshold $L_{Th}$ and
the per-sender fair share rate $\mathcal{W}_{fair}$.\\
\OP \\
\quad Updated $\mathcal{Q}_C$,  $\mathcal{W}_R^T$ and the flow entry of $f$.

\algrule[0.5pt]
\IN \\
\quad $\mathcal{W}_R = \mathcal{W}_{fair}$~\;

\algrule[0.5pt]
\Main \\

	\For{each arrived packet $\mathcal{P}$ of flow $f$}{
		$\mathcal{P}_R \leftarrow \mathcal{P}_R + 1$~\;
				\If{$\mathcal{P}_R > \mathcal{W}_R$} {
					Drop $\mathcal{P}$~; ~~$\mathcal{P}_D \leftarrow \mathcal{P}_D + 1$~\;
				}
				\lElse{
					\CQ{$\mathcal{P}$}
				}
			
			\If{$\mathcal{P}$ is the first packet in a new detection period} {
				\RL{$\mathcal{\mathcal{W}_R, \mathcal{L}_R, \mathcal{P}_R}, \mathcal{P}_D$}\;
			}

	}

\algrule[0.5pt]

\textbf{\FC} \RL{$\mathcal{W}_R, \mathcal{L}_R, \mathcal{P}_R, \mathcal{P}_D$}: \\
\quad $recentLoss \leftarrow \mathcal{P}_D / \mathcal{P}_R$~\;
\quad $packetLoss \leftarrow \lambda \cdot \mathcal{L}_R + (1-\lambda) \cdot recentLoss$~\;	
\quad $\mathcal{W}_R^{original} \leftarrow  \mathcal{W}_R$~; ~~$\mathcal{L}_R \leftarrow packetLoss$~\;
\quad \lIf{$packetLoss > L_{Th}$ and $\mathcal{P}_R > \mathcal{W}_{fair}$}{
				$\mathcal{W}_R \leftarrow \mathcal{W}_R / 2$
			}
\quad	\lElse {
				$\mathcal{W}_R \leftarrow$ \WS{$\mathcal{W}_R$}
			}
\quad $\mathcal{W}_R^T \leftarrow \mathcal{W}_R^T + \mathcal{W}_R - \mathcal{W}_R^{original}$~\;

\algrule[0.5pt]

\textbf{\FC} \CQ{$\mathcal{P}$}: \\
\quad \If{the queue $\mathcal{Q}_C$ is full}{
	Drop $\mathcal{P}$~\;
	$\mathcal{P}_D \leftarrow \mathcal{P}_D + 1$~\;
}
\quad \lElse{
	Append $\mathcal{P}$ to $\mathcal{Q}_C$
}

\algrule[0.5pt]

\textbf{\FC} \WS{$\mathcal{W}_R$}: \\
\quad \textbf{return} $\frac{\mathcal{W}_R}{\mathcal{W}_R^T}\cdot \mathcal{B}$~\;

\caption{\bf Rate Limiting Algorithm}\label{algo:rateLimit}
\end{algorithm}

\normalsize

At the very high level, the rate limiting algorithm determines the allowed rate
for each flow based on its congestion accountability. In particular,
the rate limiting windows of congestion-accountable flows (with both high packet loss rates and
high transmission rates) are significantly reduced.
Flows respecting packet losses by adjusting sending rates accordingly are
guaranteed to receive per-sender fair share of the bandwidth.
We adopt such a fairness metric because it is the optimal one that can be
guaranteed for legitimate users under strategic attacks. The proof is straightforward:
by behaving in the exact same way as legitimate users,
attackers can receive at least per-sender fair share, meaning that the
optimal guaranteed share for a legitimate user is also the per-sender fair share.
However, the algorithm allows legitimate users to obtain more bandwidth
shares when attackers fail to execute their optimal strategy.

\sys performs periodic rate limiting. In each detection period,
\sys learns each flow's transmission rate and packet loss rate to determine its $\mathcal{W}_R$.
One flow $f$'s transmission rate is quantified by $\mathcal{P}_R$,
the number of received packets from $f$ in the current period.
$f$'s packets may be dropped for two reasons: \first $f$'s
sending rate exceeds its $\mathcal{W}_R$ or \second the service queue is full due to congestion.
$f$'s packet loss rate $\mathcal{L}_R$ in the current period is the ratio of dropped packets
to received packets. While making rate limiting decisions,
\sys adopts the metric $packetLoss$, which incorporates both packet losses in
the current period and previous packet losses. Such a design prevents
attackers from hiding their previous packet losses by stopping transmitting for a while before
sending a new traffic burst (\eg the on-off shrew attack~\cite{low-rate}).
If both $packetLoss$ and $\mathcal{P}_R$ exceed their pre-defined thresholds,
\sys classifies $f$ as a maliciously behaved flow and reduces
its $\mathcal{W}_R$ by half.

We explain two design details of the rate limiting algorithm.
To begin with, the algorithm cannot make the rate limiting decision for a fresh flow
in its first detection period since \sys has not learned its packet loss rate and sending rate yet.
Thus \sys initializes its $\mathcal{W}_R$ as the pre-defined
per-sender fair share rate $\mathcal{W}_{fair}$ in the first detection period,
preventing attackers from exhausting bandwidth by creating new flows.
Besides $\mathcal{W}_{fair}$, the algorithm relies on anther three
system related parameters: $\mathcal{D}_p$, $\lambda$ and $L_{Th}$.
We discuss the reasoning for parameter settings in \S\ref{sec:paraSettings}.
Further, the \textsl{RateLimitingWindow} function returns the allowed bandwidth for $f$.
We need to convert the bandwidth value into the number of $1.5$KB packets allowed in
one detection period, which will be $f$'s updated $\mathcal{W}_R$.

We close our algorithm design with the remark concerning the SYN flooding attack.
When a SYN packet's source address is matched by one flow entry (meaning
the source address has been authenticated), it will be treated in the same way as
regular packets from the source. Thus sending SYN packets also consumes attackers'
bandwidth budget. SYN packets with unverified sources are appended to a
queue with bounded bandwidth (\eg $5\%$ of $\mathcal{B}$).
Thus the spoofed SYN flooding cannot compromise \sys's defense. Regular packets with
unidentifiable sources in the flow table are denied.

\vspace*{-0.15in}
\subsection{Parameter Settings}\label{sec:paraSettings}
\noindent{$\bm{\mathcal{D}_p}$:}
The length of detection period should be long enough for \sys to characterize each flow's behaviors during the congestion as so to determine its congestion accountability. In particular, $\mathcal{D}_p$ needs to be long enough to allow legitimate users to adapt to the congestion so as to maintain a very low packet loss rate. Meanwhile, \sys is confident that users with high packet loss rates during such a long period of
time are misbehaving. Given that TCP adjusts its window every RTT, $\mathcal{D}_p$ should be much longer than typical Internet RTTs (hundreds of milliseconds based on the CAIDA's measurement~\cite{RTT_measure}). If $\mathcal{D}_p$ is too short, the legitimate flows may fail to adapt to the congestion quickly enough, resulting in inaccurate and highly fluctuating loss rates for them. On the contrary, $\mathcal{D}_p$ cannot be too long to avoid slow reaction to attacks. Balancing the two factors, $2{\sim}6$ seconds are reasonable choices for $\mathcal{D}_p$.

\noindent{$\bm{\lambda}$}: The value of $\lambda$
represents the weight assigned to one flow's previous packet losses.
To defend against the on-off shrew attack~\cite{low-rate}, \sys gives a non-trivial weight
to previous packet losses by setting $\lambda=0.5$. Therefore, once a flow misbehaves, it will have a bad reputation for a while. In order to regain reputation, the flow would have to honor congestion by reducing its sending rate when experienced packet losses.

\noindent{$\bm{L_{Th}}$:} The value of $L_{Th}$ should be larger than normal
packet loss rates to avoid false positives.
According to the previous measurements~\cite{tcpMeasure, internetMeasure},
we set $L_{Th} = 5\%$.

\noindent{$\bm{\mathcal{W}_{fair}}$:} We define the fair share of each flow
as $\mathcal{W}_{fair} {=} \mathcal{B}/\mathcal{N}$, where
$\mathcal{N}$ is the number of flows in the flow table.\footnote{When
\sys is activated from the idle state, $\mathcal{N}$ can be obtained from the network monitoring and logging
tools such as the NetFlow~\cite{netflow}.} Again, the bandwidth
value needs to be converted into the number of packets.
$\mathcal{W}_{fair}$ is updated when new flows are initiated.
As we aggregate all traffic from the same sender as one flow,
$\mathcal{W}_{fair}$ may be updated less frequently
than each flow's $\mathcal{W}_R$.

\subsection{Algorithm Analysis}\label{sec:algo_analysis}
In this section, we prove that the rate limiting algorithm provides both
guaranteed and elastic bandwidth shares for legitimate users:
they are guaranteed to obtain the per-sender fair share and can potentially
obtain more bandwidth shares. We first state the optimal
bandwidth shares attackers can get.

\begin{lemma}\label{lemma:attack}
Given that $\mathcal{N}_L$ legitimate flows and $\mathcal{N}_A$ attack flows share the
congestion resolving layer's bandwidth $\mathcal{B}$, regardless of attackers' strategies,
the aggregated bandwidth that can be obtained by attack flows is \textbf{at most} $\frac{(1+L_{Th})\cdot \mathcal{N}_A \cdot
\mathcal{B}}{\mathcal{N}_L + \mathcal{N}_A}$.
\end{lemma}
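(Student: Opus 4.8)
The plan is to bound the bandwidth a single attack flow can extract in one detection period by $(1+L_{Th})\mathcal{W}_{fair}$ and then sum over the $\mathcal{N}_A$ attack flows. Since $\mathcal{W}_{fair}=\mathcal{B}/(\mathcal{N}_L+\mathcal{N}_A)$ (\S\ref{sec:paraSettings}), the target $\frac{(1+L_{Th})\mathcal{N}_A\mathcal{B}}{\mathcal{N}_L+\mathcal{N}_A}$ is exactly $\mathcal{N}_A$ copies of $(1+L_{Th})\mathcal{W}_{fair}$, so such a per-flow bound suffices. I would begin by recording two hard ceilings that Algorithm~\ref{algo:rateLimit} imposes on a flow $f$ inside a period: the traffic $f$ can have served (forwarded) is at most its sending rate $\mathcal{P}_R^f$, and at most its rate-limiting window $\mathcal{W}_R^f$, because the rate limiter discards every packet after the first $\mathcal{W}_R^f$ and counts those surplus packets into $\mathcal{P}_D^f$, so $\mathcal{P}_D^f\ge\mathcal{P}_R^f-\mathcal{W}_R^f$.

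The heart of the argument is that an optimal attacker cannot afford to trip the halving rule. If $f$ ever satisfies both $\mathcal{P}_R^f>\mathcal{W}_{fair}$ and $packetLoss^f>L_{Th}$, its window is halved, and letting this recur drives $\mathcal{W}_R^f$ (hence, via the window ceiling, its forwarded traffic) geometrically to zero, which is strictly worse than simply transmitting at the fair rate; so in steady state an attacker with $\mathcal{P}_R^f>\mathcal{W}_{fair}$ must hold $packetLoss^f\le L_{Th}$. Since $packetLoss$ is the $\lambda$-weighted average of $\mathcal{L}_R$ and the current-period loss $recentLoss^f=\mathcal{P}_D^f/\mathcal{P}_R^f$, once $\mathcal{L}_R$ has settled this forces $\mathcal{P}_D^f\le L_{Th}\mathcal{P}_R^f$; combined with $\mathcal{P}_D^f\ge\mathcal{P}_R^f-\mathcal{W}_R^f$ it gives $\mathcal{P}_R^f\le\mathcal{W}_R^f/(1-L_{Th})$, i.e.\ the total offered load $f$ can generate without punishment exceeds its forwarded allotment $\mathcal{W}_R^f$ by only the small factor $1/(1-L_{Th})$ --- the $(1+L_{Th})$-type slack the loss threshold leaves it. (If instead $\mathcal{P}_R^f\le\mathcal{W}_{fair}$, the first ceiling already gives forwarded traffic $\le\mathcal{W}_{fair}$.)

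What remains is to bound the attack flows' windows, which I would do through two invariants of the window bookkeeping. First, $\mathcal{W}_R^T\le\mathcal{B}$ at all times: when \sys{} activates it populates the table with $\mathcal{N}_L+\mathcal{N}_A$ entries each set to $\mathcal{W}_{fair}$, so $\mathcal{W}_R^T=\mathcal{B}$ initially; halving only lowers the sum; and one application of the update $\mathcal{W}_R\leftarrow\frac{\mathcal{W}_R}{\mathcal{W}_R^T}\mathcal{B}$ replaces a summand $w\le\mathcal{W}_R^T$ by $\frac{w}{\mathcal{W}_R^T}\mathcal{B}$, which a one-line calculation shows cannot raise the sum above $\mathcal{B}$ once it is $\le\mathcal{B}$. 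Second, every legitimate flow keeps $\mathcal{W}_R\ge\mathcal{W}_{fair}$: the detection-period length $\mathcal{D}_p$ is chosen (\S\ref{sec:paraSettings}) exactly so that legitimate TCP senders adapt within a period to $packetLoss\le L_{Th}$, hence are never halved, and each update multiplies their window by $\mathcal{B}/\mathcal{W}_R^T\ge1$, so it never falls below the initial value $\mathcal{W}_{fair}$. Combining, $\sum_{f\in A}\mathcal{W}_R^f=\mathcal{W}_R^T-\sum_{f\in L}\mathcal{W}_R^f\le\mathcal{B}-\mathcal{N}_L\mathcal{W}_{fair}=\mathcal{N}_A\mathcal{W}_{fair}$. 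Since the window ceiling of the opening paragraph bounds each attack flow's forwarded traffic by $\mathcal{W}_R^f$, the aggregated forwarded attack traffic is already at most $\mathcal{N}_A\mathcal{W}_{fair}\le\frac{(1+L_{Th})\mathcal{N}_A\mathcal{B}}{\mathcal{N}_L+\mathcal{N}_A}$; the $L_{Th}$ slack in the statement comfortably absorbs the extra (dropped) packets an attacker may inject without being halved, whose total the previous paragraph bounds by $\frac{1}{1-L_{Th}}\sum_{f\in A}\mathcal{W}_R^f$.

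The step I expect to be the real obstacle is making ``steady state'' rigorous, and with it the second invariant: the per-flow detection periods are interleaved rather than synchronous, flows may join or leave (changing $\mathcal{N}$, $\mathcal{W}_{fair}$ and the initial windows), and the shared congestion-queue tail-drops couple every flow's loss rate, so in principle a legitimate flow could momentarily see $packetLoss>L_{Th}$ and be halved before it adapts. I would handle this by first fixing a clean model --- a fixed flow set with one synchronous round of updates per period --- in which ``steady state'' is well defined and the two invariants become clean lemmas, showing there that even a one-off halving of a legitimate flow is recovered within a bounded number of periods (because $\mathcal{W}_R^T<\mathcal{B}$ then forces strictly increasing windows), and only afterwards arguing that the general asynchronous schedule perturbs the window accounting by a negligible amount; the two per-flow ceilings of the opening paragraph hold unconditionally, so only the refinement in the second paragraph and the aggregate accounting need this extra care.
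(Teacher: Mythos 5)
Your proposal is correct and follows essentially the same route as the paper's proof: the $(1+L_{Th})$ factor comes from the loss-rate slack an attacker may exploit before the halving rule fires, applied on top of windows that start at (and, summed over attackers, never exceed) $\mathcal{N}_A\cdot\mathcal{W}_{fair}$. The paper compresses this into three informal sentences and simply asserts the window accounting; your two invariants ($\mathcal{W}_R^T\le\mathcal{B}$ and legitimate windows staying at or above $\mathcal{W}_{fair}$) and your candid treatment of the asynchronous/steady-state issue supply rigor the published proof leaves implicit.
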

\begin{proof}
\sys initializes each flow's $\mathcal{W}_R$ as per-sender fair share rate $\mathcal{W}_{fair}$.
Thus attackers can obtain $\frac{\mathcal{N}_A \cdot \mathcal{B}}{\mathcal{N}_L + \mathcal{N}_A}$ initial bandwidth.
The rate limiting algorithm allows a maximum $L_{Th}$ loss rate before
further reducing one flow's rate limiting window. Thus the optimal strategy
for an attack flow is to strictly comply with \sys's rate limiting by
sending no more than $1+L_{Th}$ times its rate limiting window.
Otherwise, its bandwidth share will be further reduced.
In a hypothetical situation where attackers are able to know their exact rate limiters
and control their packet losses remotely, they can obtain at most
$\frac{(1+L_{Th})\cdot \mathcal{N}_A \cdot \mathcal{B}}{\mathcal{N}_L + \mathcal{N}_A}$.
\end{proof}

Based on the Lemma \ref{lemma:attack}, we obtain the following theorem.

\begin{theorem}\label{theorem:legitimate}
Each legitimate flow can obtain \textbf{at least} $\frac{(1+L_{Th})\cdot\mathcal{B}}{\mathcal{N}_L + \mathcal{N}_A}$
bandwidth share, given that its transport protocol can
fully utilize the allowed bandwidth.
\end{theorem}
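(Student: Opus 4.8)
The plan is to obtain the lower bound as the mirror image of Lemma~\ref{lemma:attack}: the algorithm treats all flows symmetrically, through the same throttling predicate and the same window update, so a legitimate flow can claim at least as much as an attack flow playing its optimal strategy, and Lemma~\ref{lemma:attack} already tells us that amount is $(1+L_{Th})\mathcal{B}/(\mathcal{N}_L+\mathcal{N}_A)$ per attack flow. Concretely this rests on two facts about \textsl{RateLimitingDecision}: (i) a legitimate flow is never sent down the $\mathcal{W}_R\leftarrow\mathcal{W}_R/2$ branch, and (ii) the fair-share initialization pins its baseline window to $\mathcal{W}_{fair}=\mathcal{B}/(\mathcal{N}_L+\mathcal{N}_A)$.

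For (i) I would invoke the parameter discussion of \S\ref{sec:paraSettings}: since $\mathcal{D}_p$ spans several RTTs, a congestion-responsive sender has time within one period to pull its in-period loss ratio $\mathcal{P}_D/\mathcal{P}_R$ below $L_{Th}=5\%$, and because $packetLoss$ is a convex combination of that ratio with the (inductively already small) $\mathcal{L}_R$, the predicate $packetLoss>L_{Th}$ fails; hence the flow only ever executes $\mathcal{W}_R\leftarrow(\mathcal{W}_R/\mathcal{W}_R^T)\,\mathcal{B}$. For (ii) I would show by induction over detection periods that $\mathcal{W}_R^T\le\mathcal{B}$ always holds --- it starts at $\sum\mathcal{W}_{fair}=\mathcal{B}$, a round of non-penalized updates maps $\mathcal{W}_R^T$ back to $\mathcal{B}$, and any halving strictly decreases it --- so $(\mathcal{W}_R/\mathcal{W}_R^T)\,\mathcal{B}\ge\mathcal{W}_R$ and the legitimate window is non-decreasing, therefore always at least $\mathcal{W}_{fair}=\mathcal{B}/\mathcal{N}$, the flow table holding exactly the $\mathcal{N}_L$ legitimate plus $\mathcal{N}_A$ attack flows.

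The $(1+L_{Th})$ factor then comes from the same slack that Lemma~\ref{lemma:attack} exploits for attackers: a flow with window $\mathcal{W}_R$ may inject up to $(1+L_{Th})\mathcal{W}_R$ in a period before being throttled, so a legitimate flow whose transport can fully utilize its allowed rate realizes at least $(1+L_{Th})\mathcal{W}_{fair}=(1+L_{Th})\mathcal{B}/(\mathcal{N}_L+\mathcal{N}_A)$. And if attackers deviate from their Lemma~\ref{lemma:attack}-optimal strategy, some attack window is halved, forcing $\mathcal{W}_R^T<\mathcal{B}$ and strictly growing the legitimate windows, so the bound only improves --- which also yields the companion ``elastic share'' claim for free.

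The main obstacle I expect is making the ``bandwidth obtained'' accounting of the previous paragraph fully rigorous: one must argue that the $L_{Th}$-slack traffic is genuinely part of a legitimate flow's realized share rather than merely sent-then-dropped, and that the true worst case is the Lemma~\ref{lemma:attack}-optimal attacker, the one leaving the least spare room in the congestion queue. I would address this by characterizing the fixed point of the window recursion --- $\mathcal{W}_R^T\le\mathcal{B}$, with equality exactly when no flow is currently penalized --- so that against that attacker every window settles at $\mathcal{W}_{fair}$ and the per-flow bound is tight, while any deviation leaves the legitimate flows strictly better off. Step (i) is a close second, since it tacitly assumes the protected senders are TCP-like and that $\mathcal{D}_p$ really does suffice for them to converge below $L_{Th}$.
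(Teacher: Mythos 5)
Your skeleton is the same as the paper's: a legitimate flow complies with the rate limiting, so it is never sent down the halving branch and keeps at least $\mathcal{W}_{fair}$; when attackers deviate from the Lemma~\ref{lemma:attack} strategy their windows shrink, $\mathcal{W}_R^T$ falls below $\mathcal{B}$, and \textsl{RateLimitingWindow} inflates the legitimate windows. The paper's proof is exactly this and nothing more. What you add is genuinely useful: the invariant $\mathcal{W}_R^T\le\mathcal{B}$ does survive the asynchronous per-flow updates (replacing one window $w\le\mathcal{W}_R^T$ by $w\mathcal{B}/\mathcal{W}_R^T$ moves the total to $\mathcal{W}_R^T+w(\mathcal{B}-\mathcal{W}_R^T)/\mathcal{W}_R^T\le\mathcal{B}$), and you are the only one of the two who tries to account for the $(1+L_{Th})$ factor at all. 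The obstacle you flag at the end is the real one, and it does not resolve in your favor: the main loop of Algorithm~\ref{algo:rateLimit} drops every packet beyond $\mathcal{W}_R$ within a period, so the $L_{Th}$ slack is sent-then-dropped traffic, not delivered bandwidth. Against attackers playing the Lemma~\ref{lemma:attack} strategy, $\mathcal{W}_R^T=\mathcal{B}$, every window sits at $\mathcal{W}_{fair}$, and a legitimate flow's delivered share is exactly $\mathcal{B}/(\mathcal{N}_L+\mathcal{N}_A)$ --- the stated constant is reachable only if ``obtain'' is read as ``allowed to inject.'' The paper's own proof silently proves only the weaker $\mathcal{B}/(\mathcal{N}_L+\mathcal{N}_A)$ bound and never mentions the $(1+L_{Th})$ factor, so your proposal is, if anything, more honest about where the argument is thin; your step (i) (that $\mathcal{D}_p$ suffices for a congestion-responsive sender to hold $packetLoss\le L_{Th}$) is likewise assumed rather than proved in the paper, via the parameter discussion of \S\ref{sec:paraSettings}.
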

\begin{proof}
As each legitimate flow complies with \sys's rate limiting,
it is guaranteed to receive the per-sender fair share.
However, the per-sender fair share
is the lower-bound of its bandwidth share. When attackers fail to adopt
their optimal strategy (\eg sending flat rates),
their rate limiting windows are significantly reduced.
As a result, legitimate flows' windows, returned by the \textsl{RateLimitingWindow} function,
will be increased since $\mathcal{W}_R^T$ is reduced. Thus legitimate
flows can receive more bandwidth than the per-sender fair share.
\end{proof}

To sum up, Lemma~\ref{lemma:attack} and Theorem~\ref{theorem:legitimate} spell
a dilemma for attackers: sending aggressively
rapidly throttles themselves to almost zero bandwidth share whereas
complying with \sys's rate limiting makes their attacks in vain.

\section{Implementation and Evaluation}\label{sec:implementation}
In this section, we describe the implementation and evaluation of \sys. We
first demonstrate that \sys is scalable to deal with DDoS attacks
involving millions of attack flows and meanwhile introduces negligible packet processing overhead.
Then we implement all three layers of \sys's defense on our physical testbed to
evaluate \sys's performance. Further, we add detailed simulations to prove that
\sys is effective to mitigate large scale DDoS attacks.

\subsection{Overhead and Scalability Analysis}\label{sec:scalability}
The flood throttling layer can be implemented as weighted fair queuing. Thus
it introduces almost zero overhead since \sys does not maintain any extra states.
The overhead of user-specific layer depends on specific policies.
To learn the overhead of \sys's congestion resolving layer
(\eg per-packet processing overhead and memory consumption),
we implement \sys's rate limiting logic on a Dell
PowerEdge R320 server shipped with an $8$-core Intel E$5$-$1410$
$2.8$GHz CPU and $24$GB memory.
As illustrated in Fig. \ref{fig:flow_entry}, the total size of a single flow
entry is $24$ bytes. Thus, even when \sys maintains a
flow table with $100$ million flows, the memory consumption is just a few gigabytes,
which can be easily supported by commodity servers.
We show both the memory consumption and per-packet processing overhead
for three table sizes ($1$, $10$ and $100$ million entries)
in Figure \ref{fig:scalability}.

\begin{figure}[t]
  \centering
  \mbox{
    \subfigure{\includegraphics[scale=0.45]{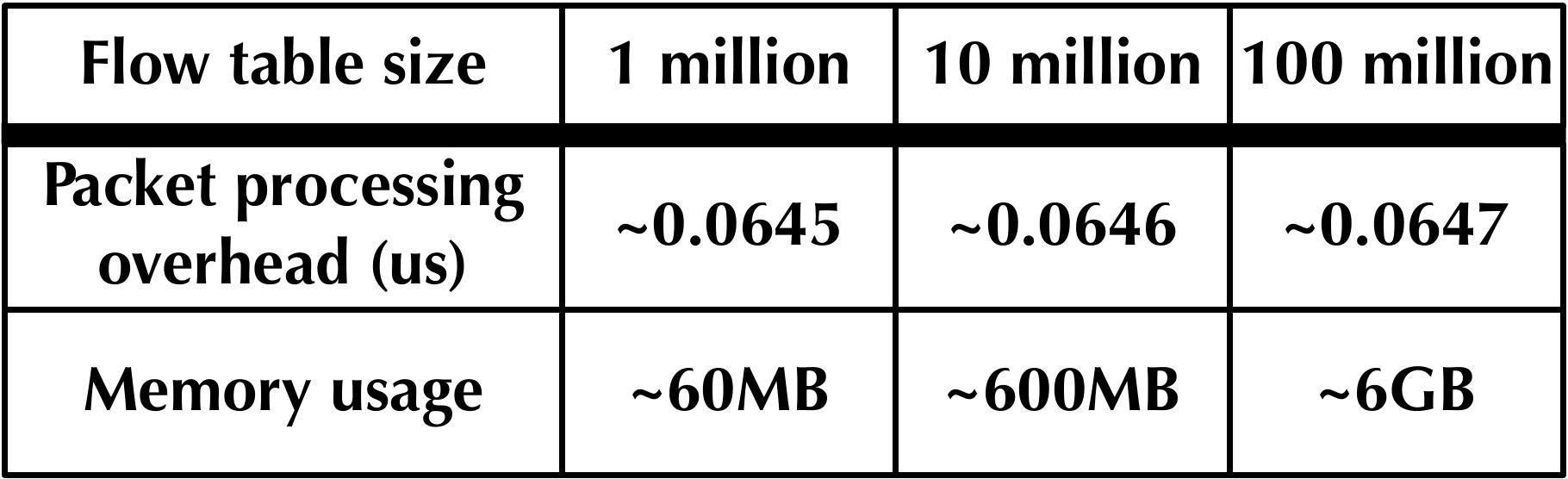}}
    }
  \caption{\sys's memory consumption and per-packet processing overhead.}
  \label{fig:scalability}
\end{figure}

For the largest table size, the memory consumption is around $6$GB,\footnote{Note that
the memory usage for $100$ million flow entries is not exactly $2.4$GB since
we adopt the \texttt{map} data structure to implement the flow table, resulting
in additional memory consumption.} indicating that
memory will not become the bottleneck of \sys's implementation.
Further, the per-packet processing overhead remains almost the same when the number of
flow entries increases from $1$ million to $100$ million. Thus \sys
can effectively scale up to deal with DDoS attacks involving millions of attack flows.
Moreover, the ${\sim}0.06\mu s$ per-packet processing overhead
is negligible even for the $10$Gbps Ethernet with around $1.2\mu s$ per-packet processing time.
Thus the victim can still enjoy high speed Ethernet after deploying \sys.
Note that the implementation of \sys's rate limiting algorithm may be optimized according to the
system hardware to further reduce the overhead.

\subsection{Testbed Experiments}\label{sec:testbed}

We implement a prototype of \sys on our testbed consisting $9$ servers, illustrated in Fig. \ref{fig:testbed_arch}.
Each server is the same Dell PowerEdge R$320$ used to learn \sys's overhead (\S\ref{sec:scalability}).
The server is running Debian $6.0$-$64$bit with Linux $2.6.38.3$ kernel and is installed a
Broadcom BCM$5719$ NetXtreme Gigabit Ethernet NIC.
We organize the servers into $7$ senders (either attackers or legitimate users),
one software router implementing \sys's three-layered defense and one victim, as illustrated in Fig. \ref{fig:testbed_arch:b}.
Thus the interdomain bandwidth in the testbed is $1$Gbps.

The flood throttling layer is implemented as a weighted fair queuing module at the
output port of the software router. The module serves TCP flows and UDP flows
in two separate queues with different weights. The normalized weight for
TCP flows' queue is $0.9$ whereas UDP flows' queue weight is $0.1$ (again the
victim can overwrite the setting).
Each queue has its own dedicated buffer since UDP flows will consume
all buffers when competing with TCP flows, resulting in almost zero throughput for TCP traffic.
With the protection of the flood throttling layer, TCP flows with
sufficient traffic demand can obtain $900$Mbps share of the interdomain link,
regardless of how much UDP traffic is thrown to the victim.

The congestion resolving layer is composed of a set of rate limiters.
Each rate limiter is implemented via the Hierarchical Token Bucket (HTB) of
the Linux's Traffic Control~\cite{linux_tc}. The bandwidth of each rate limiter
is each flow's $\mathcal{W}_R$, determined by \sys's rate limiting algorithm,
to ensure no flow can send faster than its $\mathcal{W}_R$. We set $\mathcal{D}_p{=}5$s
in the implementation.

\begin{figure}[t]
  \centering
  \mbox{
    \subfigure[\label{fig:testbed_arch:a}Servers.]{\includegraphics[scale=0.3]{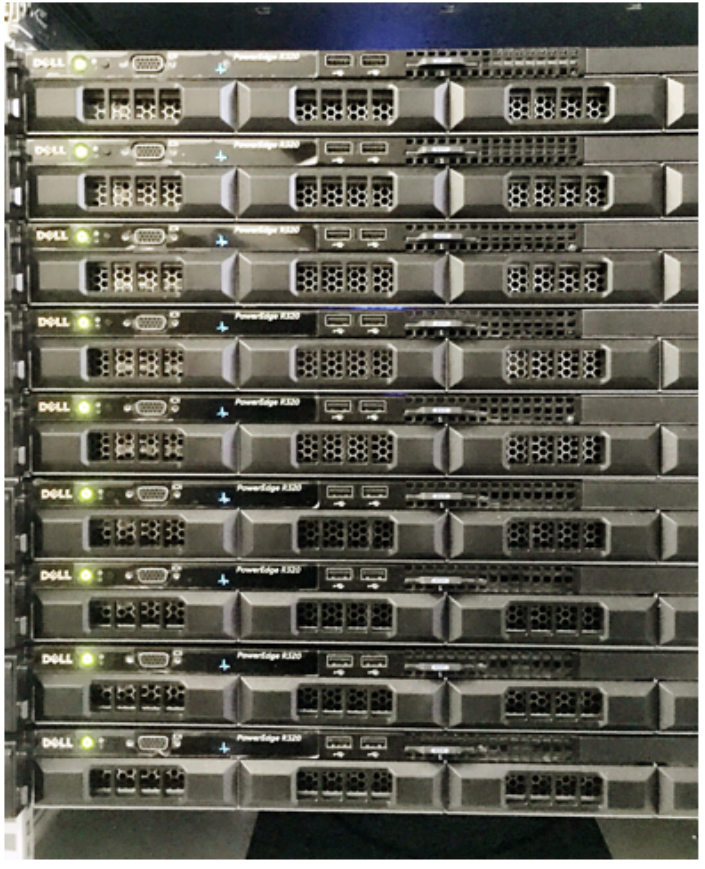}}\quad\quad
    \subfigure[\label{fig:testbed_arch:b}Testbed topology.]{\includegraphics[scale=0.28]{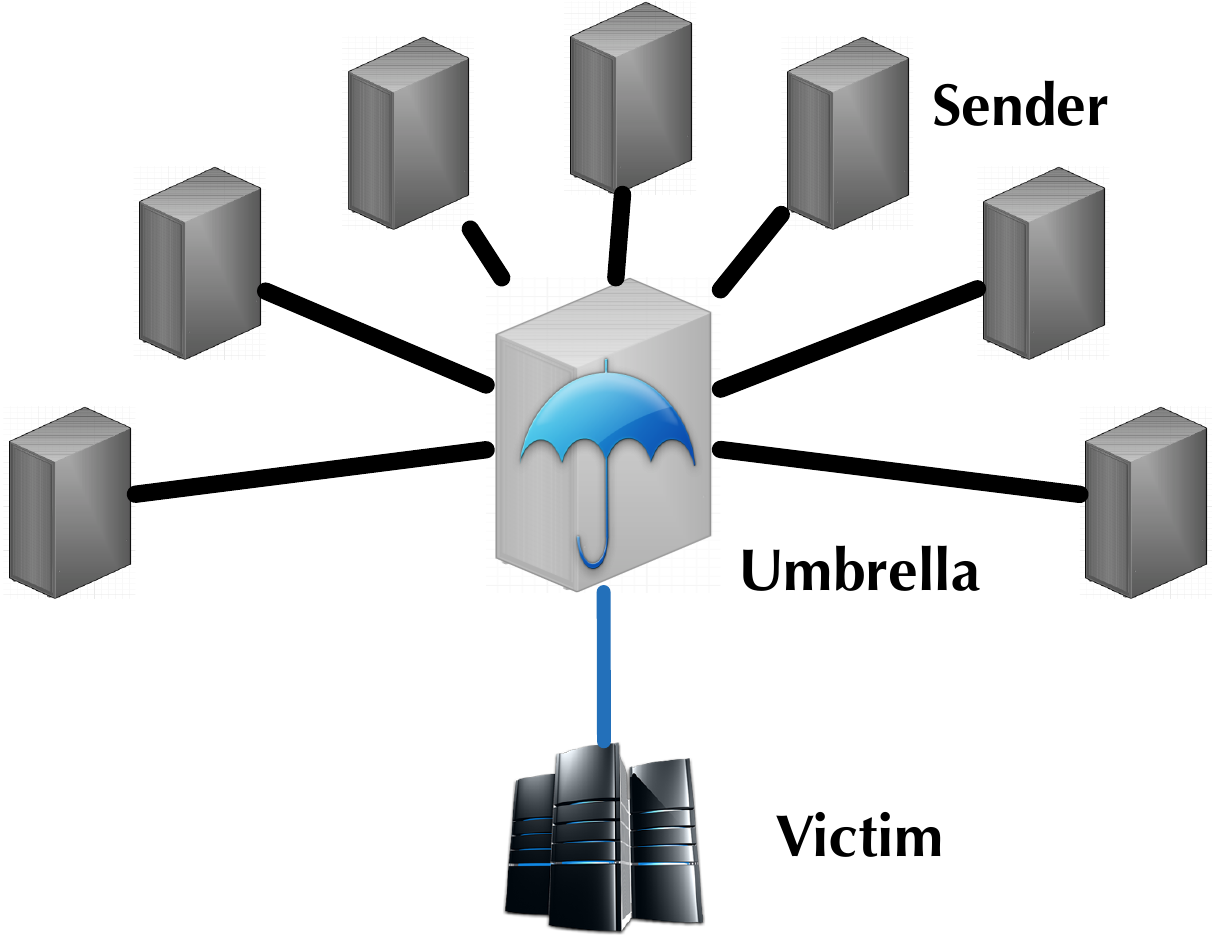}}
  }
  \caption{The prototype of \sys on our testbed.}
  \label{fig:testbed_arch}
\end{figure}

\begin{figure*}[t]
  \centering
  \mbox{
    \subfigure[\label{fig:testbed:a}Layer one defense.]{\includegraphics[scale=0.35]{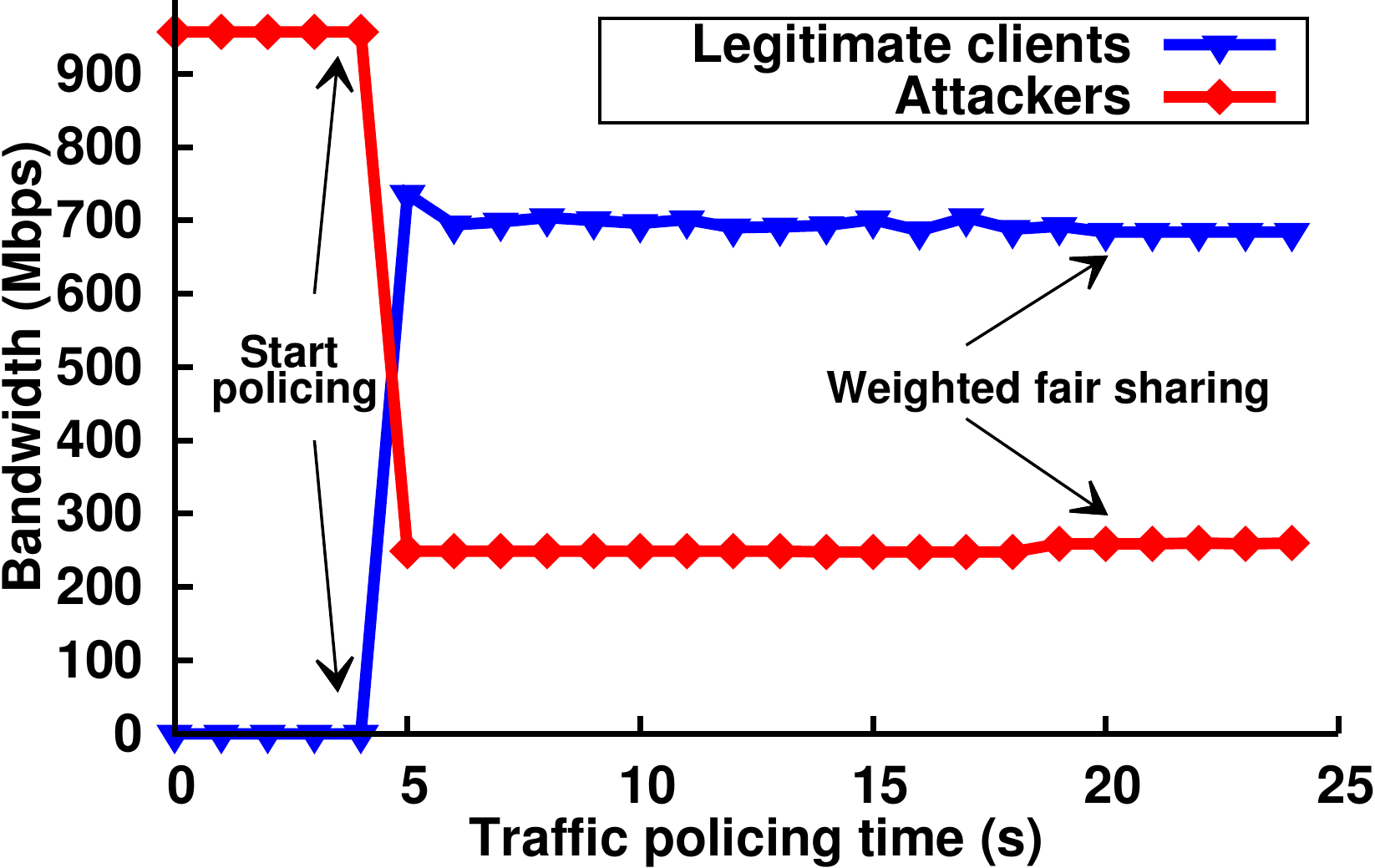}}
    \subfigure[\label{fig:testbed:b}Layer two defense.]{\includegraphics[scale=0.35]{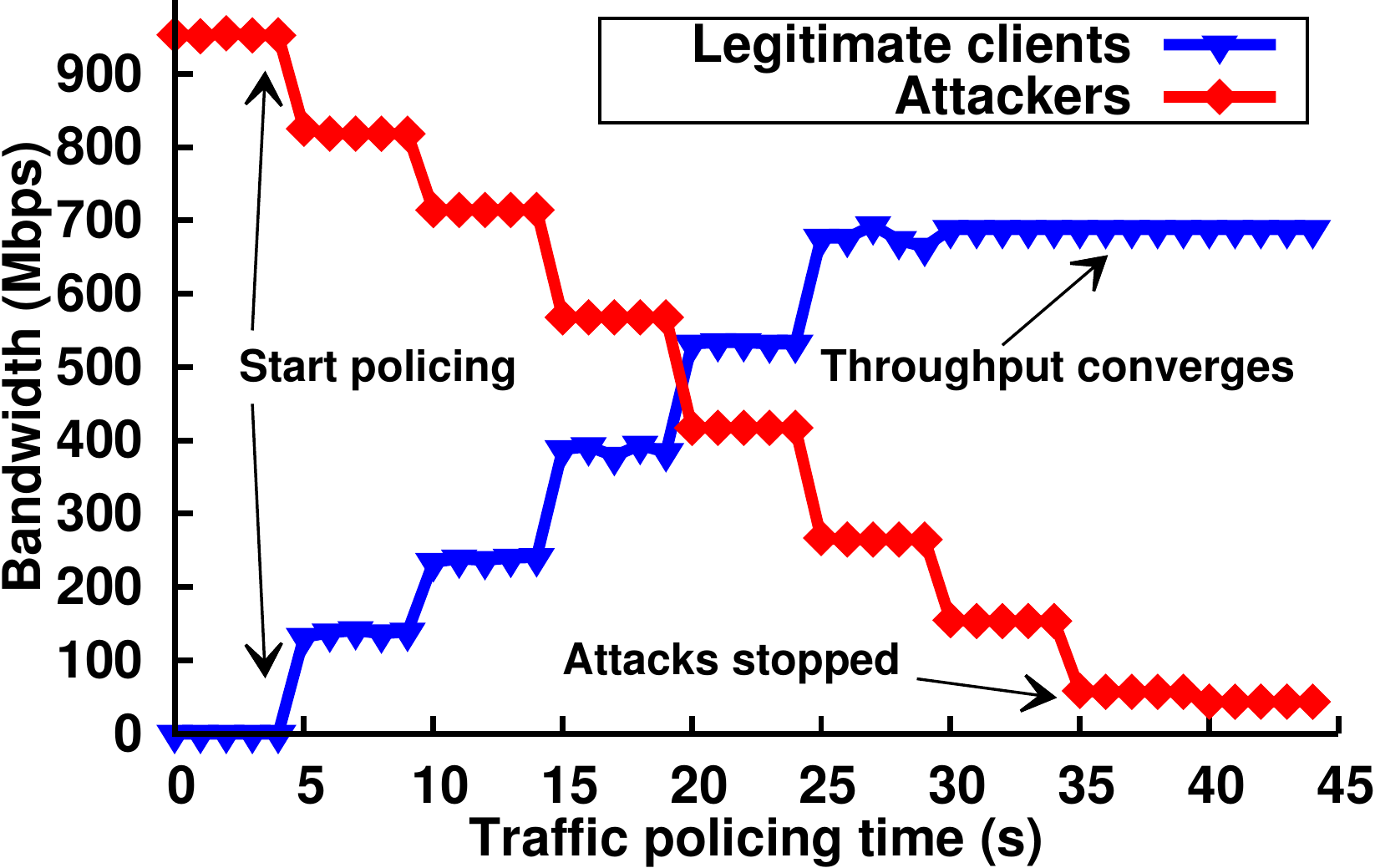}}
     \subfigure[\label{fig:testbed:c}Layer two and three defense.]{\includegraphics[scale=0.35]{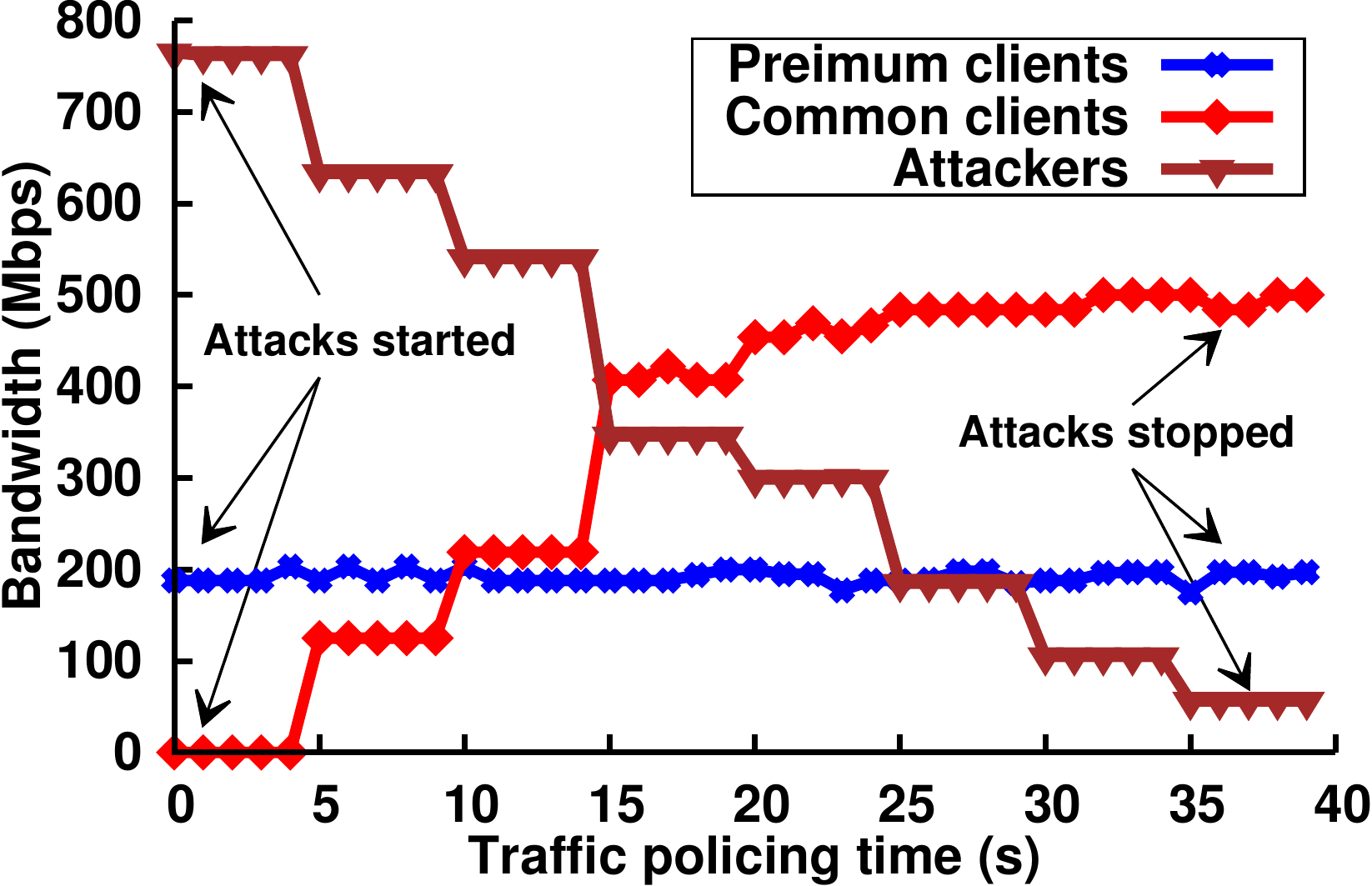}}
    }
  \caption{Testbed experiments.}
  \label{fig:testbed}
\end{figure*}

\begin{figure*}[t]
  \centering
  \mbox{
    \subfigure[\label{fig:simulation:a}On-off shrew attacks.]{\includegraphics[scale=0.35]{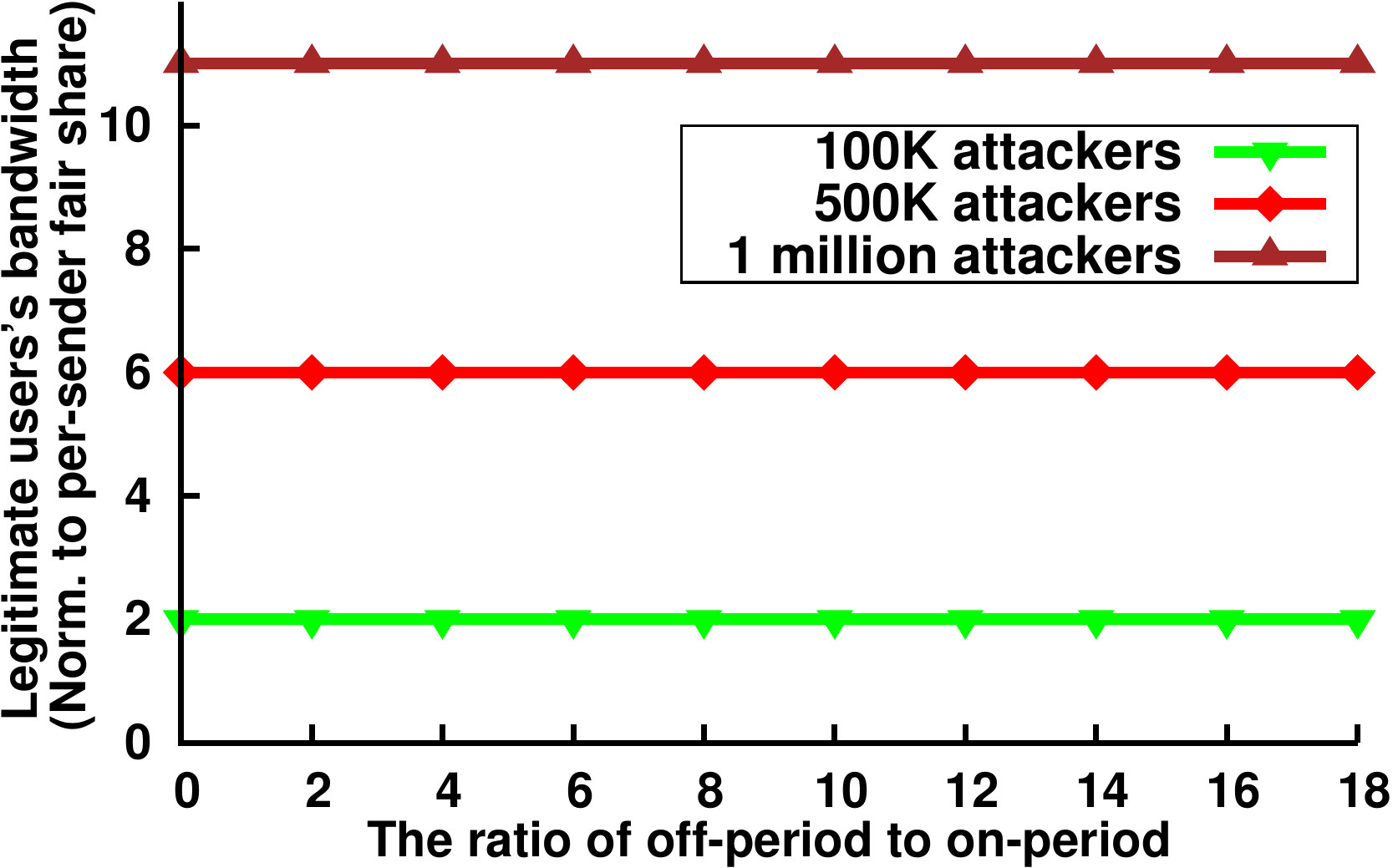}}
    \subfigure[\label{fig:simulation:b}Varying the volume of attack traffic.]{\includegraphics[scale=0.35]{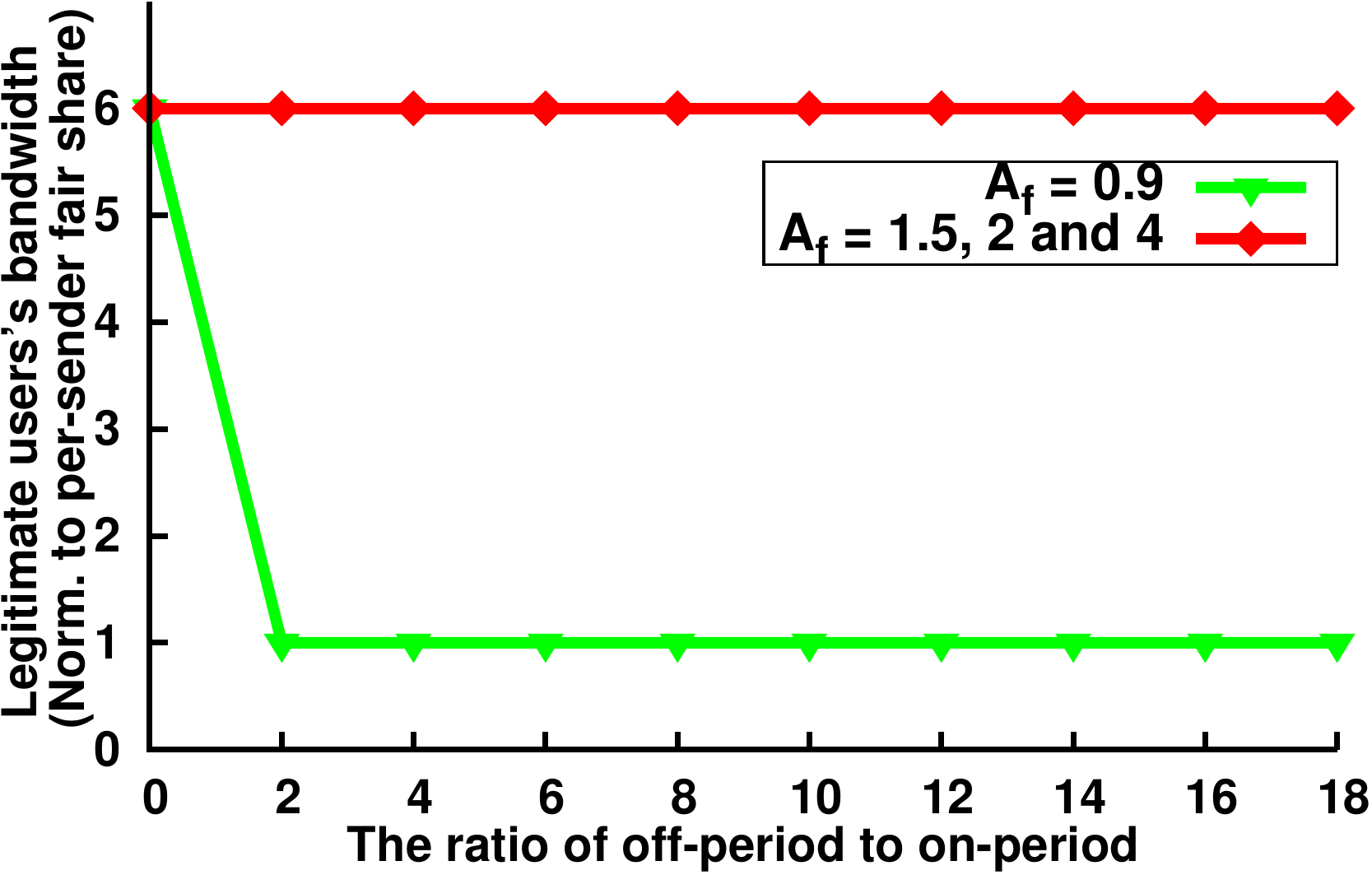}}
     \subfigure[\label{fig:simulation:c}Attackers' optimal strategy.]{\includegraphics[scale=0.35]{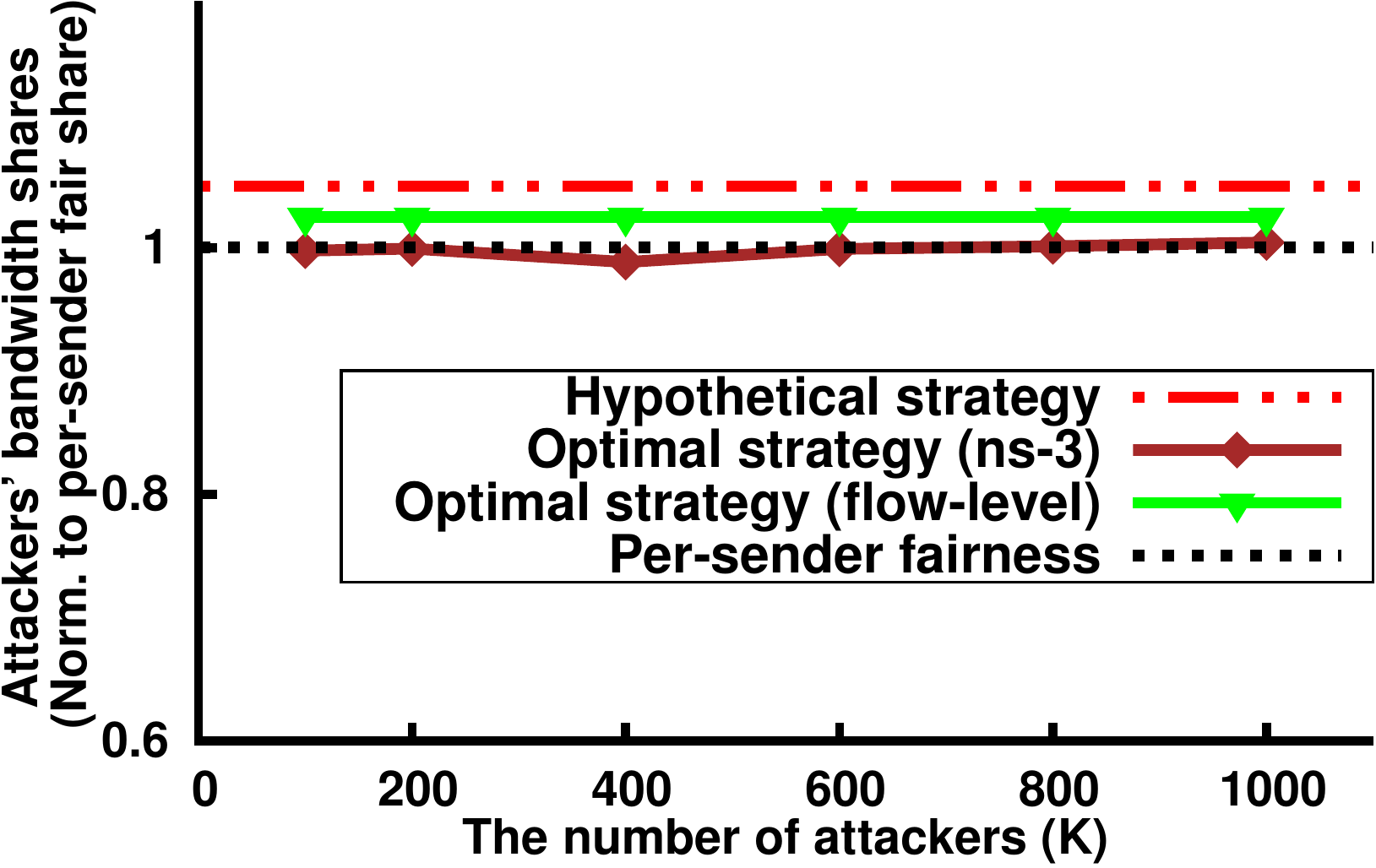}}
    }
  \caption{Mitigating large scale DDoS attacks.}
  \label{fig:simulation}
\end{figure*}

In our prototype, we implement one representative traffic policing rule for the user-specific layer:
the victim reserves bandwidth for premium clients so that they will not be affected by
DDoS attacks. Such bandwidth guarantee is achieved by the weighted fair queuing
module assigning one dedicated queue to premium clients.
We did not limit common clients' rates to ensure bandwidth shares for premium clients
because otherwise the unused bandwidth guarantee is wasted. On the contrary,
weighted fair queuing is work-conserving, allowing common clients to grab
leftover bandwidth from premium clients.
Thus the final queuing module contains three queues.

We perform three experiments on our testbed to evaluate \sys's defense, detailed as follows.

\noindent\textbf{Layer-one defense:} In this experiment, $6$ senders,
each sending $1$Gbps UDP traffic towards the victim, emulate the amplification-based
DDoS attacks in which the total volume of attack traffic is
$6\times$ the interdomain bandwidth. The $7$th sender
sends TCP traffic to represent legitimate clients. As real-life interdomain link often has
overprovisioning to absorb traffic bursts,
we set TCP flows' demand as $700$Mbps ($70\%$ of the total interdomain bandwidth).
We present our experiment results in form of sequential events, illustrated in Fig.~\ref{fig:testbed:a}.
At $t=0$s, the victim is hammered by DDoS attacks, causing complete
denial of service to legitimate clients. \sys's layer-one defense is initiated
at $t=4s$ to provide (almost) immediate DDoS prevention. Legitimate clients' bandwidth
shares grow rapidly to accommodate their traffic demand. Due to the work-conservation of
weighted fair queuing, attack traffic consumes the spare bandwidth of the interdomain link.

\noindent\textbf{Layer-two defense:} In this experiment,
although all senders are adopting TCP, $6$ of them
deviate from TCP's congestion control algorithm by continuously
injecting packets in face of congestive losses. As illustrated in
Fig.~\ref{fig:testbed:b}, malicious senders successfully exhaust the interdomain
bandwidth without the protection of \sys. At $t=4$s, \sys starts to police
traffic based on its rate limiting algorithm. As attackers fail to
comply with \sys's rate limiting, their bandwidth shares are significantly reduced,
resulting in almost zero share in the steady state. However, legitimate clients' throughput
gradually converges to their traffic demand.

\noindent\textbf{Layer-three defense:}
In this experiment, one sender is upgraded to represent premium clients with
$200$Mbps bandwidth guarantee. Another sender stands for common clients with
$500$Mbps traffic demand. The rest senders emulate attackers transmitting
malicious TCP flows. To satisfy the guarantee, the victim configures the normalized queue weight
as $0.2$, $0.7$ and $0.1$ for premium clients, common clients and UDP traffic, respectively.
Fig.~\ref{fig:testbed:c} demonstrates that premium clients' bandwidth share is
guaranteed throughout the experiment, saving them from the turbulence caused by DDoS attacks.
Further, common clients are protected after \sys enables its rate limiting, which effectively
thwarts DDoS attacks.

\subsection{Mitigating Large Scale DDoS Attacks}\label{sec:evaluation}
In this section, we evaluate \sys's defense against large scale DDoS attacks.
In the evaluation, we develop a flow-level simulator rather than completely
relying on the existing packet-level emulators or simulators (\eg Mininet~\cite{mininet}, ns-$3$~\cite{ns3})
because it takes them prohibitively long to emulate large scale DDoS attacks.
Specifically, assume that a packet-level simulator can
process one million packets per second and that one million
attack flows, each sending at $5$Mbps, attack a $10$Gbps
link. Even if we set the packet size as the maximum allowed
size $1.5$KB, it will take the simulator around $140$ hours to simulate
just one second of the attack. By concealing the detailed
per-packet processing and focusing on per-flow
behaviors, our flow-level simulator is still able to accurately
evaluate \sys, which in fact relies on flow-level states to police traffic.
However, we also perform a moderate scale simulation
on ns-$3$ to benchmark our flow-level simulator.

The network topology adopted in  simulations is similar to that of the testbed experiments
except the number of senders can be more than $1$ million and we scale up the
interdomain bandwidth to $10$Gbps.
Unless otherwise stated, the following experiments
are performed on our flow-level simulator.

We design experiments for different strategies attackers may take: \first
they launch on-off shrew attacks~\cite{low-rate} to evade detection, \second vary
the volume of attack traffic and \third dynamically adjust their rates based on
packet losses. In the on-off attack, attackers coordinate with each other to send high traffic bursts during
on periods and stay inactive during off periods. We use the ratio of the off-period's
length to the on-period's length (denoted by $\mathcal{R}^{off}_{on}$) to represent attackers' strategy in the on-off attack.
In the second strategic attack, we define the \emph{aggressiveness factor} $\mathcal{A}_f$ as the
ratio of the total volume of attack traffic to the interdomain bandwidth. Attackers may vary
the $\mathcal{A}_f$ during attacks.
In the first two experiments, attackers disrespect packet losses and
keep injecting packets in case of severe congestive losses.
The design of these two experiments is to prove that when attackers fail to adopt the optimal strategy (the third strategy discussed below), \sys accurately throttles attack flows so that legitimate senders receive more bandwidth than their guaranteed portions. In the third strategy (the optimal one), attackers rely on their transport protocols to probe packet losses so as to adjust their rates to honor the congestion. In this case, we demonstrate that \sys guarantees per-sender fairness for legitimate senders.

In the first strategic attack, we set
the length of the on-period the same as $\mathcal{D}_p$ ($5$s)
and vary the ratio $\mathcal{R}^{off}_{on}$ from $0$ to $18$.
Meanwhile we set the number of legitimate clients $\mathcal{N}_{L}{=}100K$ and
vary the number of attackers from $100K$ to $1$ million. Further, we set
$\mathcal{A}_f{=}2$ but varying the rate of each attacker based on a
Gaussian distribution.\footnote{Assume the aggregated rate of attackers is $R$,
then the Gaussian distribution's mean is $R/\mathcal{N}_A$ and the standardization is $1$.}
The experimental results, illustrated in Fig. \ref{fig:simulation:a}, show that
legitimate users can obtain more bandwidth than the per-sender
fair share regardless of $\mathcal{R}^{off}_{on}$'s value and the attack scale.
This is because \sys's rate limiting algorithm incorporates flows' previous
packet losses while making rate limiting decisions. Consequently,
even completely staying inactive during off-periods,
attackers fail to save their reputation by
the strategic on-off attack. Further, \sys can effectively
distinguish misbehaved flows from legitimate ones
no matter how many misbehaved flows are involving. Ironically,
larger attack scales result in higher benefit gains for legitimate users in the sense
that their bandwidth shares are boosted to higher levels compared with the
per-sender fair share. In all scenarios, attackers' bandwidth shares are limited to almost zero.

In the second strategic attack, we vary $\mathcal{A}_f$ from $0.9$ to $4$.
Although setting $\mathcal{A}_f{<}1$ cannot completely disconnect
the victim from its ISP, attackers can throttle legitimate users
to a tiny fraction of the total bandwidth by aggressively injecting packets
(image an analogical situation where a $900$Mbps UDP flow
competes with TCP flows on a $1$Gbps link).
Attackers experience low packet loss rates as legitimate users
cut their rates dramatically. To defend against such ``moderate"
attacks, the victim can configure \sys to start traffic policing when the
link utilization exceeds a pre-defined threshold (\eg 90\%).
We fix $\mathcal{N}_L{=}100K$ and $\mathcal{N}_A{=}500K$ in this experiment.
The results (Fig. \ref{fig:simulation:b}) show that legitimate users get
at least the per-sender fair share in all settings.
Note that in moderate attacks, attackers can prevent their bandwidth shares from being
further reduced by extending the off-period, resulting in per-sender fairness.
However, increasing $\mathcal{A}_f$ actually puts attackers in a bad situation that
their flows are blocked.

The previous two experiments prove that when attackers fail to comply with
\sys's rate limiting, their bandwidth shares are significantly reduced.
Legitimate users may therefore obtain more bandwidth than the per-sender fair share.
In the third strategy, attackers actively adjust their rates based on
packet losses so as to maintain low packet loss rates. Besides our flow-level
simulator, we also adopt the ns-$3$~\cite{ns3} in this setting.
To circumvent ns-$3$'s scalability problem, we adopt the similar approach used in NetFence~\cite{netfence}.
Specifically, we fix the number of nodes ($500$ attackers and $50$ legitimate nodes in our experiments)
and scale down the link capacity to simulate the large scale attacks. By varying the link bandwidth
from $5$Mbps to $50$Mbps, we are able to simulate the attack scenarios where
$100$K to $1$ million attackers try to flood the $10$Gbps interdomain link.
We use the ns-$3.21$ version and add \sys's rate limiting logic to the \textsl{PointToPointNetDevice} module, which
performs flow analysis upon receiving packets. To execute their optimal strategy,
attackers have to rely on TCP-like protocols to probe the network condition and determine their rates accordingly.
We test all supported TCP congestion control algorithms in ns-$3$: Tahoe, Reno, and NewReno.
Legitimate clients are adopting the NewReno.

The results (Fig. \ref{fig:simulation:c}) show that complying
with \sys's rate limiting grants attackers the per-sender fairness (results for different
TCP protocols in ns-$3$ are very close and we plot the results for the NewReno).
As stated in Lemma~\ref{lemma:attack}, the hypothetical strategy for attackers,
assuming that they are able to
know their exact allowed rates and control remote packet losses, produces
an unreachable upper bound for their bandwidth shares.
Further, our flow-level simulator and the packet-level ns-$3$ simulator
share (almost) the same results.


\section{Related Work}\label{related}
In this section, we discuss related work that has inspired the design of \sys.
Generally speaking, we categorize the previous DDoS defense approaches
into two major schools (\ie filtering-based and capability-based approaches), whereas
there are other approaches built on different defense primitives.

Filtering-based systems (\eg IP Traceback~\cite{practicalIPTrace, advancedIPTrace}, AITF~\cite{AITF},
Pushback \cite{pushback, implementPushback}, StopIt~\cite{StopIt})
stop DDoS attacks by filtering attack flows.
Thus they need to distinguish attack flows from
legitimate ones. For instance, IP Traceback uses a packet marking
algorithm to construct the path that carries attack flows so as to block
them. AITF aggregates all traffic traversing the same series of
ASs as one \emph{Flow} and blocks such flows if the victim suspects attacks.
Pushback informs upstream routers to block certain type of traffic.
StopIt assumes the victim can identify the attack flows.
However, filtering-based systems often require remote ASs to block attack traffic on the victim's behalf,
which is difficult to enforce in the Internet.
Further, these systems may falsely block legitimate flows
since the method used to distinguish attack flows could have a high false positive rate.

The capability-based systems, such as SIFF~\cite{siff} and TVA~\cite{TVA}, try to suppress attack traffic by only accepting packets carrying valid capabilities. The original design is vulnerable to the DoC attack~\cite{doc}, which can be mitigated by the Portcullis protocol~\cite{portcullis}.
NetFence \cite{netfence} is proposed to achieve network-wide per-sender fairness
based on capabilities. However, these approaches assume universal capability deployment.
CRAFT \cite{craft} and Mirage~\cite{mirage} are proposed towards real-world
deployment. CRAFT emulates TCP states for all traversing flows so that no one
can obtain a greater share than what TCP allows.
However, CRAFT requires upgrades of both the Internet core and end-hosts.
Mirage~\cite{mirage}, a puzzle-based solution, needs to be
incorporated into IPv6 deployment. The state-of-the-art in this category MiddlePolice~\cite{MiddlePolice} is readily deployable in the current Internet. However, it still relies on cloud infrastructure to police traffic, which may be privacy-invasive for some organizations.

Other DDoS defense solutions, besides the above two categories,
include SpeakUp~\cite{speakup}, Phalanx \cite{phalanx}, SOS
\cite{sos} and few future Internet architecture proposals like XIA~\cite{xia} and SCION~\cite{scion}.
SpeakUp allows legitimate senders to increase their rates to compete with attackers.
Such an approach is effective when the bottleneck happens at the application layer so that
legitimate users can get more requests processed given all their requests can be delivered.
In the case where network is the bottleneck, SpeakUp may potentially congest
the network. Phalanx and SOS propose to use large scale overlay networks to defend DDoS attacks.
XIA and SCION focus on building the clean-slate Internet architecture so as to enhance
Internet security, \eg enforcing accountability~\cite{aip}.

In contrast to these prior work, \sys is motivated to address a real-world threat and achieves two critical features (\ie deployability and privacy-preserving)  towards this end.

\section{Conclusion and Future Works}\label{conclusion}
This paper presents the design, implementation and evaluation of \sys, a new DDoS defense mechanism enabling ISPs to offer readily deployable and privacy-preserving DDoS prevention services. To provide effective DDoS prevention, \sys merely requires independent deployment at the victim's ISP and no Internet core or end-hosts upgrades, making \sys immediately deployable. Further, \sys does not require the ISP to terminate victim's application connections, allowing the ISP to operate at network layer as usual. In its design, \sys's multi-layered defense allows \sys to stop various DDoS attacks and provides both guaranteed and elastic bandwidth shares for legitimate clients. Based on  the prototype implementation, we demonstrate that \sys is scalable to deal with large scale DDoS attacks involving millions of attackers and introduces negligible packet processing overhead. Finally, our physical testbed experiments and large scale simulations prove that Umbrella is effective to mitigate various strategic DDoS attacks.

We envision two major followup directions of this work in the near future. First, the user-specific layer in \sys enables a potential DDoS victim to enforce self-desired traffic control policies during DDoS mitigation. However, one challenge is how to guide the victim to develop reasonable policies that are most suitable for its business logic. This is because proposing valid policies may require profound understanding of the victim's network traffic, which typically depends on comprehensive traffic monitoring and analysis. Unfortunately, the potential victim may lack such capability in this regard. Thus, designing and implementing various machine learning based traffic discovery tools is part of our future work. The second potential research direction is to enable smart payment between ISPs and potential victims. The high level goal is to ensure that ISPs and victims can unambiguously agree on certain filtering services so that the ISPs are paid properly on each attack packet it filters and meanwhile a potential victim can reclaim its payment back if an ISP fails to stop attacks. We propose to design a smart-contract based system in this regard, relying on the ``non-stoppable'' features of smart contracts. Our initial proposal is under review.


\balance
\small
\bibliographystyle{ieeetr}
\bibliography{paper}

\begin{IEEEbiography}[{\includegraphics[width=1in,height=1.25in,clip,keepaspectratio]{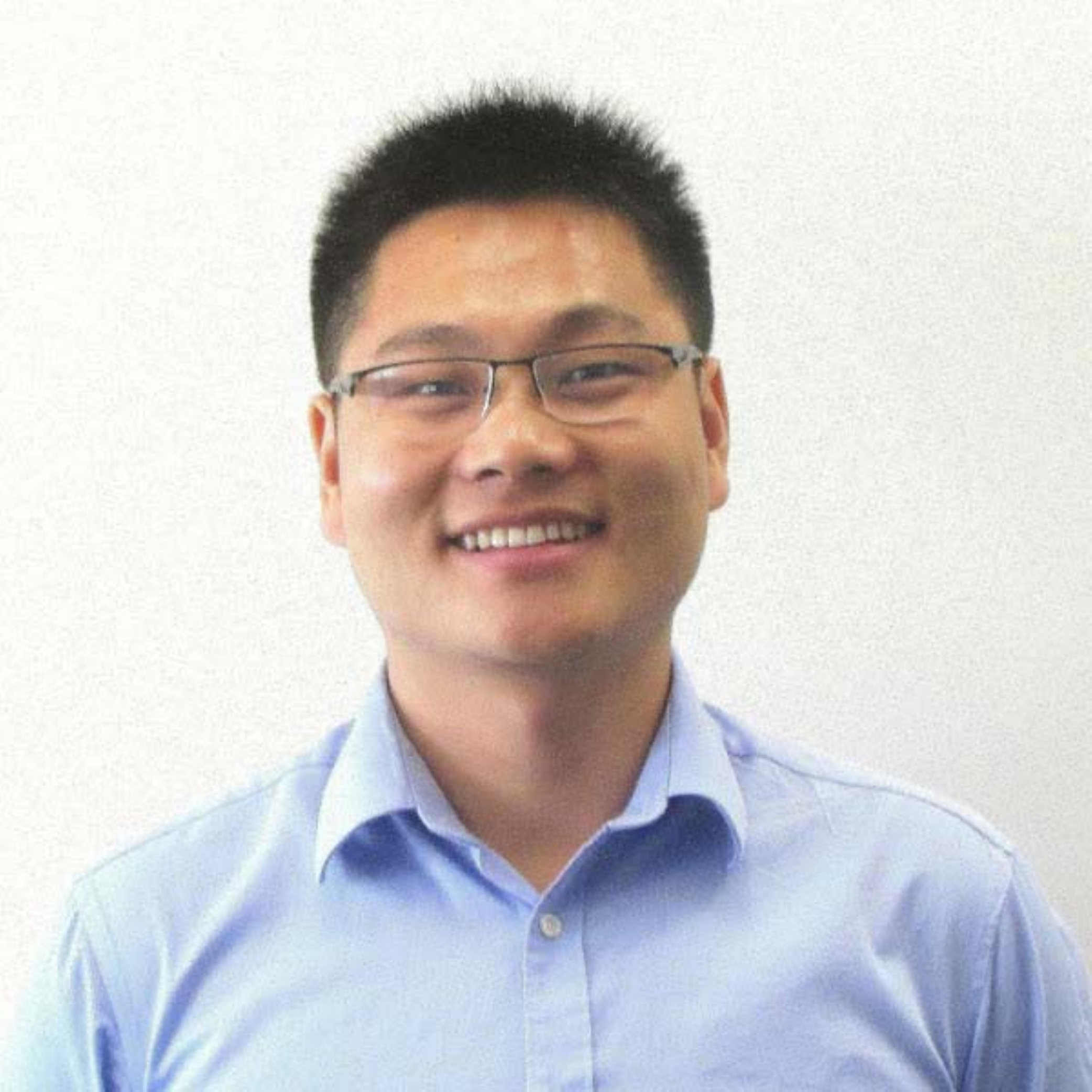}}]{Zhuotao Liu}
received his Ph.D. degree from University of Illinois at Urbana-Champaign in 2017, and B.S. degree from Shanghai Jiaotong University in 2012. Currently, he works in Network Infrastructure Team at Google, maintaining Google's global-scale private WAN. His research interests include Internet security \& privacy, data center networking and blockchain infrastructure.
\end{IEEEbiography}

\begin{IEEEbiography}[{\includegraphics[width=1in,height=1.25in,clip,keepaspectratio]{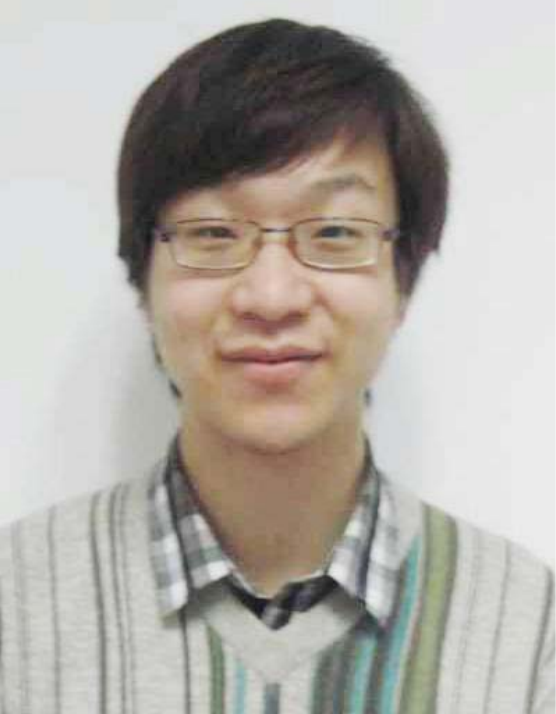}}]{Yuan Cao}(S'09-M'2014)
received his B.S. degree from Nanjing University, M.E. degree from Hong Kong University of Science and Technology and Ph.D. degree from Nanyang Technological University in 2008, 2010 and 2015, respectively. Currently he works as an assistant professor in College of Internet of Things Engineering of Hohai University. His research interests include hardware security, silicon physical unclonable function, and analog/mixed-signal VLSI circuits and systems.
\end{IEEEbiography}

\begin{IEEEbiography}[{\includegraphics[width=1in,height=1.25in,clip,keepaspectratio]{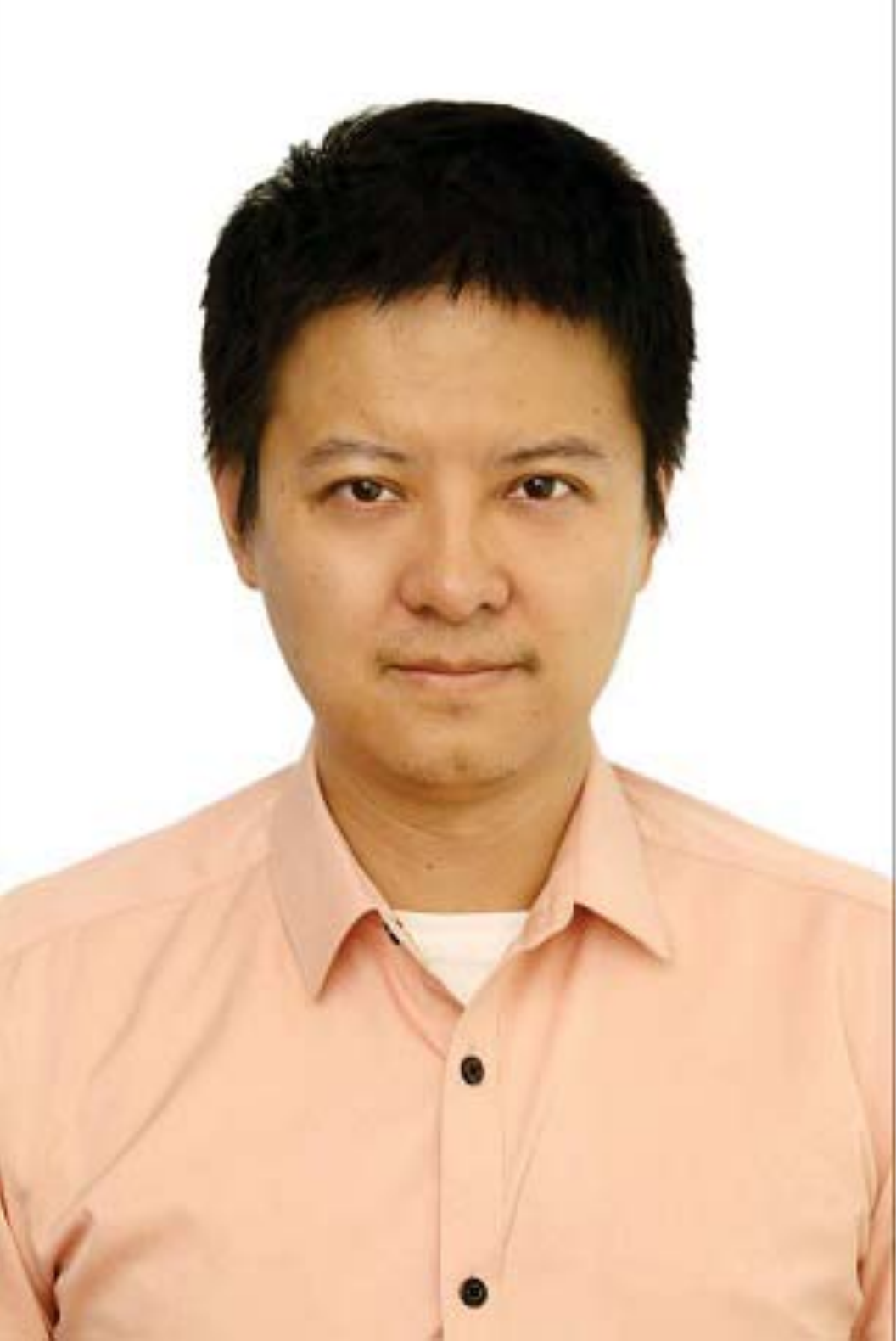}}]{Min Zhu}
received his B.S. degree and Ph.D. degree from Tsinghua University in 2006 and 2012, respectively. Currently he works as an researcher of Wuxi Research Institute of Applied Technologies Tsinghua University. His research interests include reconfigurable computing and cryptographic engineering.
\end{IEEEbiography}

\begin{IEEEbiography}[{\includegraphics[width=1in,height=1.25in,clip,keepaspectratio]{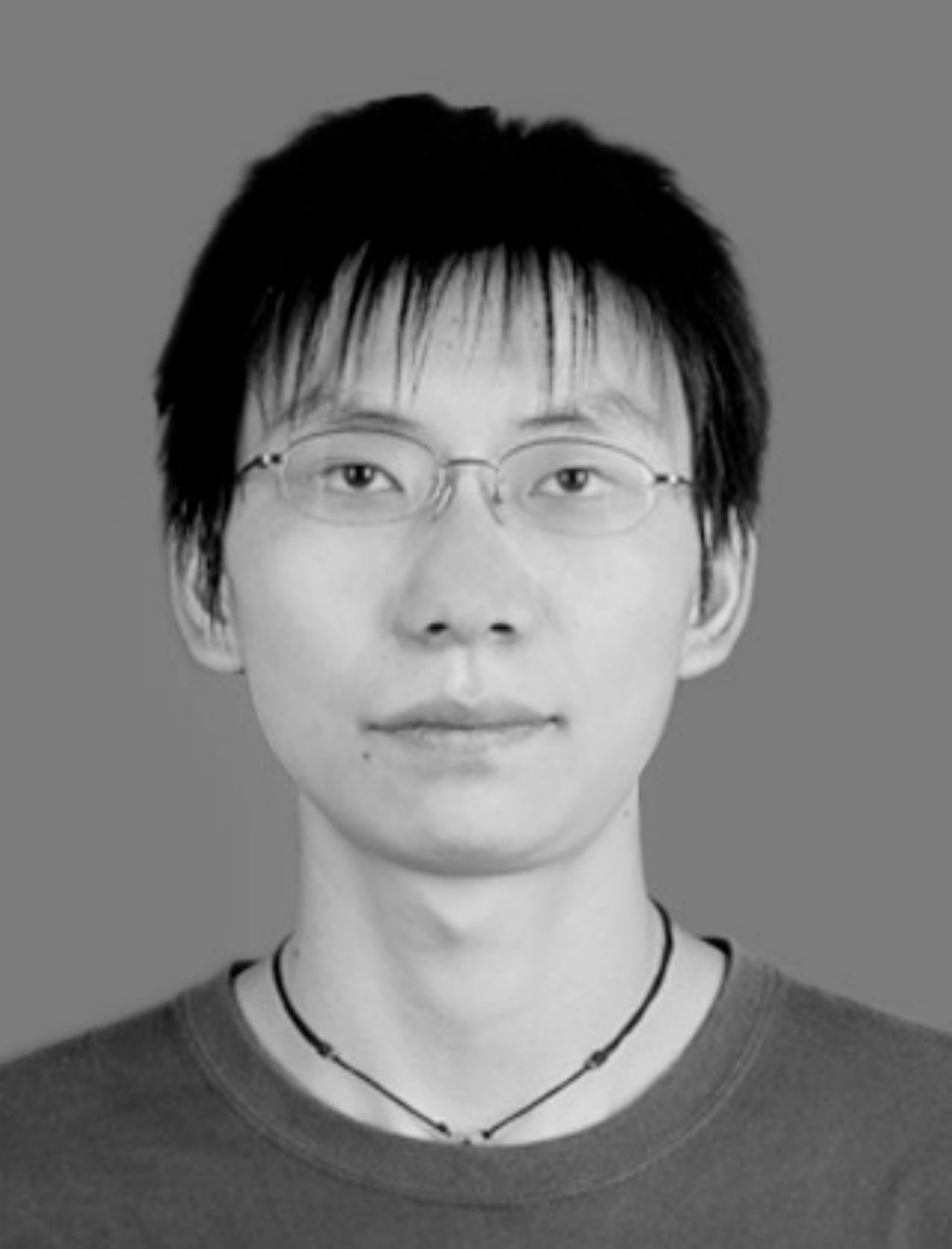}}]{Wei Ge}
received his B.S. degree and Ph.D. degree from Southeast University in 2006
and 2015, respectively. Currently he works as an assistant researcher of
Electronic Science and Engineering School of Southeast University. His research interests include reconfigurable computing, silicon physical unclonable function, and VLSI circuits and systems.
\end{IEEEbiography}

\end{document}